
\newif\ifarxiv
\arxivtrue

\newif\ifllncs
\llncsfalse 

\ifarxiv
\documentclass[a4paper,british]{article}
\usepackage[top=2.7cm,margin=1in]{geometry}
\else

\ifllncs
\documentclass[a4paper,UKenglish,cleveref, autoref, thm-restate, runningheads,envcountsame]{llncs} 
\usepackage[T1]{fontenc}
\else
\documentclass[a4paper,UKenglish,cleveref, autoref, thm-restate]{lipics-v2021} 
\fi

\fi

\usepackage{babel}
\usepackage[utf8]{inputenc}

\ifarxiv
\fi

\ifarxiv 
\usepackage[hyphens]{url}
\usepackage[hidelinks]{hyperref}
\fi
\usepackage{subcaption,comment} 
\usepackage{tikz}

\usepackage{mathtools}

\usepackage{graphicx}

\ifllncs

\makeatletter
\if@envcntsame
\let\c@conjecture\relax
\newcounter{conjecture}
\fi
\makeatother

\usepackage[hidelinks]{hyperref}
\usepackage{color}

\urlstyle{rm}
\fi

\usepackage{amsmath,amsfonts,amssymb,amsthm}
\usepackage{thm-restate}

\usepackage{microtype,xspace,wrapfig,multicol} 
\usepackage[textsize=tiny,color=lightgray]{todonotes} 
\usepackage[normalem]{ulem} 
\usepackage{tabularx}
\makeatletter
\newcommand{\specificthanks}[1]{\@fnsymbol{#1}}
\makeatother


\newcommand{\BB}{\text{BB}}
\newcommand{\beatit}{15\xspace}

\makeatletter
\newcommand\footnoteref[1]{\protected@xdef\@thefnmark{\ref{#1}}\@footnotemark}
\makeatother

\title{Hardness of busy beaver  value $\BB(\beatit)$}
\ifarxiv
\else

\ifllncs

\else
\ccsdesc[100]{Theory of computation~Models of computation}
\ccsdesc[100]{Mathematics of computing~Discrete mathematics}
\author{ }{ }{ }{ }{ }
\keywords{Busy beaver, Turing machines, Erd\H{o}s' conjecture, mathematical hardness, simulation}
\Copyright{ } 
\authorrunning{ }
\fi

\fi

\ifarxiv
\author{Tristan St\'{e}rin\textsuperscript{\specificthanks{1}} \hspace{2em} Damien Woods\thanks{Hamilton Institute and Department of Computer Science, Maynooth University, Ireland. Research supported by European Research Council (ERC, grant agreement No 772766, Active-DNA project) under the European Union's Horizon 2020 research and innovation programme, European Innovation Council  (EIC, DISCO, No 101115422); and Science Foundation Ireland (SFI) under Grant number 18/ERCS/5746.} }
\fi
\date{}

\ifarxiv
\newtheorem{theorem}{Theorem}[]
\newtheorem{definition}[theorem]{Definition}
\newtheorem{lemma}[theorem]{Lemma}

\newtheorem{conjecture}{Conjecture} 

\newtheorem{example}[theorem]{Example}

\fi

\usepackage{tikz}

\newcommand{\N}{\mathbb{N}}
\newcommand{\Z}{\mathbb{Z}}

\newcommand{\mulTwo}{{\bf\textrm{mul2}}\xspace}
\newcommand{\FSTF}{\ensuremath{F}}
\newcommand{\FSTG}{\ensuremath{G}}
\newcommand{\mulTwoF}{{\bf mul2\_F}\xspace}
\newcommand{\mulTwoG}{{\bf mul2\_G}\xspace}
\newcommand{\checkTwo}{{\bf find\_2}\xspace}
\newcommand{\checkHalt}{{\bf check\_halt}\xspace}
\newcommand{\rewind}{{\bf rewind}\xspace}

\newcommand{\mulTwoFS}{{\bf mul2\_F\_}}
\newcommand{\mulTwoGS}{{\bf mul2\_G\_}}
\newcommand{\checkTwoS}{{\bf find\_2\_}}
\newcommand{\checkHaltS}{{\bf check\_halt\_}}
\newcommand{\rewindS}{{\bf rewind\_}}

\newcommand{\Eds}{Erd\H{o}s\xspace}
\newcommand{\Mf}{\ensuremath{M_{5,4}}\xspace}
\newcommand{\Mt}{\ensuremath{M_{15,2}}\xspace}

\newcommand\bolu[1]{{\pmb{\underline{#1}}}}


\begin{document}

\ifllncs

  \author{Tristan St\'{e}rin\textsuperscript{\specificthanks{1}}  \hspace{2em} Damien Woods\thanks{{Hamilton Institute and Department of Computer Science, Maynooth University, Ireland. Research supported by European Research Council (ERC, grant agreement No 772766, Active-DNA project) under the European Union's Horizon 2020 research and innovation programme,   European Innovation Council  (EIC, DISCO, No 101115422); and Science Foundation Ireland (SFI) under Grant number 18/ERCS/5746.}} }


  %
  \authorrunning{T. St\'{e}rin and D. Woods}
  %
  \institute{
    \email{\{tristan.sterin,damien.woods\}@mu.ie} \\ \ \\
    Hamilton Institute and Department of Computer Science\\ Maynooth University, Ireland\\
    \url{https://dna.hamilton.ie}
  }

\fi

\maketitle
\begin{abstract}
  The busy beaver value BB($n$) is the maximum number of steps made by any $n$-state, 2-symbol deterministic halting  Turing machine starting on blank tape. 
  The busy beaver function $n \mapsto \BB(n)$ is uncomputable and, from below, only 4 of its values, $\BB(1)\, \dots\, \BB(4)$, are known to date.
  This leads one to ask: from above, what is the smallest BB value that encodes a major mathematical challenge?
  Knowing BB(4,888) has been shown by Yedidia and Aaronson \cite{Yedidia2016} to be at least as hard as solving Goldbach's conjecture, with a subsequent improvement, as yet unpublished, to  BB(27)~\cite{Charles2013,BusyBeaverFrontier}.
  We prove that knowing BB(15) is at least as hard as solving the following Collatz-related conjecture by Erd\H{o}s, open since 1979~\cite{ErdosPowers2}: for all $n>8$ there is at least one digit~2 in the base 3 representation of $2^n$.
  We do so by constructing an explicit 15-state, 2-symbol Turing machine that halts if and only if the conjecture is false. This 2-symbol Turing machine simulates a conceptually simpler 5-state, 4-symbol machine which we construct first.
  This makes, to date, BB(15) the smallest busy beaver value that is related to a \textit{natural} open problem in mathematics,  bringing to light one of the many challenges underlying the quest of knowing busy beaver values.
\end{abstract}
%

\section{Introduction}

In the theory of computation, there lies a strange and complicated beast, the busy beaver function. This function tracks a certain notion of algorithmic complexity, namely, the maximum number of steps any halting algorithm of a given program-size may take. Although formulated in terms of Turing machines, the underlying notion of ``maximum algorithmic bang for your buck''  that it captures could be defined in any reasonable programming language.

The busy beaver function $n \mapsto \BB(n)$
was introduced by Tibor Rad\'{o} in 1962 and corresponds to the maximum number of steps made by a halting deterministic Turing machine with $n$ states and 2 symbols starting from blank input~\cite{Rado_1962}.\footnote{Rad\'{o} originally used the notation $S(n)$. The more modern~\cite{BusyBeaverFrontier} notation $\mathrm{BB}(n)$ can be a source of confusion since some authors, for example \cite{HARLAND2022368,Pavel_discorvery}, use BB to mean Rad\'{o}'s $\Sigma$ which counts the number of 1s on the final tape of a halting Turing machine \cite{Rado_1962}.}
It was generalised by Brady~\cite{10.5555/57249.57258,michel2019busy} 
to machines with $k$ symbols; $\BB(n,k)$. Busy beaver functions, i.e. $n,k \mapsto \BB(n,k)$, or $n \mapsto \BB(n, k)$ for fixed $k \geq 2$, are not computable.
Otherwise the halting problem on blank tape would be computable: take any machine with $n$ states and $k$ symbols, run it for $\BB(n,k)+1$ steps; if it has not halted yet, we know that it will never halt. They also dominate any computable function.

To date, only four non-trivial\footnote{Trivial values include those where $n=1$ since for all $k \geq 1$, $\BB(1,k) = 1$.} busy beaver values are known: $\BB(2) = 6$, $\BB(3) = 21$, $\BB(4) = 107$ and $\BB(2,3) = 38$ \cite{michel2019busy,PMichel_website}. Machines that halt without input after so many steps that they beat previously known record-holders, with the same $n$ and $k$, are called busy beaver \textit{champions} and they give lower bounds for $\BB(n)$ or  $\BB(n,k)$. It is conjectured that $\BB(5) = \text{47,176,870}$ as there is an explicit 5-state 2-symbol champion that halts after $\text{47,176,870}$ steps\footnote{%
  \scriptsize{
    \begin{tabular}{l|l|l|l|l|l|}
        & \textbf{A} (init)  & \textbf{B}         & \textbf{C}         & \textbf{D}         & \textbf{E}         \\ \hline
      0 
        & 1\, R\, \textbf{B} & 1\, R\, \textbf{C} & 1\, R\, \textbf{D} & 1\, L\, \textbf{A} & Halt               \\ \hline
      1 & 1\, L\, \textbf{C} & 1\, R\, \textbf{B} & 0\, L\, \textbf{E} & 1\, L\, \textbf{D} & 0\, L\, \textbf{A} \\ \hline
    \end{tabular}}\\ 
  Current busy beaver champion~\cite{Marxen_1998} for machines with $5$ states (\textbf{A}--\textbf{E}) and $2$ symbols (0,1), it halts in $\text{47,176,870}$ steps starting from all-$0$ input and state $\textbf{A}$, which gives the lowerbound $\BB(5) \geq \text{47,176,870}$. Proving that bound tight is the goal of the bbchallenge project \cite{bbchallengeBusyBeaver}.
} \cite{Marxen_1998,BusyBeaverFrontier}.
It is known \cite{Pavel_discorvery,PMichel_website} that $
  \BB(6) \geq 10 \uparrow \uparrow 15 = 10^{{\scriptstyle 10^{\cdot^{\cdot^{\cdot^{\scriptstyle 10}}}}}}$, a tower of 15 powers of 10, well beyond $10^{80}$, the estimated
number of atoms in the observable universe. Also, $\BB(2,6) \geq 10 \uparrow \uparrow ( 10 \uparrow \uparrow 10^{10^{315}})$ \cite{Pavel_discorvery2,PMichel_website} and $\BB(3,4) \geq \text{Ack}(14)$, with the Ackerman number defined as $\text{Ack}(n) = n \uparrow^n n$  \cite{Pavel_discorvery3,Ack14BlogPost}.
These results highlight the sheer growth of the busy beaver function (it provably grows faster than any computable function \cite{Rado_1962}). However, lower bounds like this one do not give us a concrete sense of how hard it is to know busy beaver values, we just know that they quickly become ``huge''.


Can we give a more formal notion of {\em hardness} of finding busy beaver values?
Indeed, for specific values of $n$ and $k$, how hard is it to find $\BB(n,k)$?
A recent trend, that we follow here, seeks to formally relate values of $\BB(n,k)$ to notoriously hard mathematical problems.
For instance, one can imagine~\cite{chaitin1987computing} designing a Turing machine that, starting from blank tape, will halt if and only if it finds a counterexample to Goldbach's conjecture (every even integer greater than 2 is the sum of two primes).
Such a machine was actually built, using 4,888 states and $2$ symbols~\cite{Yedidia2016}. This result implies that knowing the value of $\BB(\textrm{4,888})$ would allow us to computably decide Goldbach's conjecture: run the machine for $\BB(\textrm{4,888})+1$ steps -- which, as we've seen, must be an unbelievably huge number -- and if it has halted before, then the conjecture is false otherwise it is true. The result was later claimed to be improved to a 27-state 2-symbol machine \cite{Charles2013,BusyBeaverFrontier}, which, subject to being proved correct, would be the smallest busy beaver value, before our work, that relates to a \textit{natural} mathematical problem. Similar efforts led to the construction of a 5,372-state 2-symbol machine that halts if and only if the Riemann hypothesis is false~\cite{Yedidia2016}; with a later claimed improvement to $744$ states~\cite{BusyBeaverFrontier}, hence knowing the value of $\BB(744)$ is at least as hard as solving the Riemann hypothesis.


\ifllncs
  \vspace{-1ex}
\fi
\subsection{Results and discussion}

Here, we continue the approach of relating small busy beaver values to hard mathematical problems  towards obtaining insight into the location of the  frontier between knowable and unknowable busy beaver values.
Our particular approach gives upper bounds on the smallest counterexample to such problems.
We do this by giving machines that search for counterexamples to an open, Collatz-related conjecture formulated in 1979 by Erd\H{o}s:
\begin{conjecture}[Erd\H{o}s~\cite{ErdosPowers2}]\label{conj:Erdos}
  \normalfont
  For all natural numbers $n > 8$ there is at least one digit $2$ in the base 3 representation of~$2^n$.
\end{conjecture}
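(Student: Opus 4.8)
\medskip
\noindent\textbf{Proof proposal.} Conjecture~\ref{conj:Erdos} is a famous open problem, so the honest plan is not to settle it but to lay out the natural lines of attack, see where each stalls, and thereby motivate the indirect route --- converting the conjecture into a \emph{finite} computation --- taken in the rest of the paper. The first thing I would do is recast the statement dynamically. Reading the base-$3$ digits of $2^n$ from least to most significant while multiplying by $2$ to obtain $2^{n+1}$ is computed by a finite-state transducer with only two states, tracking whether the incoming carry is $0$ or $1$: a digit $d\in\{0,1,2\}$ with carry $c\in\{0,1\}$ is rewritten to $(2d+c)\bmod 3$ and emits carry $\lfloor(2d+c)/3\rfloor$, which is again $0$ or $1$ because $2d+c\le 5$. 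Conjecture~\ref{conj:Erdos} then asserts that iterating this transducer $n$ times on the one-symbol string ``$1$'' yields, for every $n>8$, a string containing the digit~$2$, the only $n$ with $0\le n\le 8$ for which this fails being $n\in\{0,2,8\}$ (e.g.\ $2^8=256=(100111)_3$).

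A proof in the strict sense would come either from a monotone potential on ternary strings that strictly increases under doubling and is incompatible with being digit-$2$-free, or from a periodicity/invariance argument excluding arbitrarily long digit-$2$-free strings from the orbit of ``$1$''. The difficulty --- and what makes this genuinely Collatz-flavoured --- is that no such invariant is known: the doubling transducer stirs the digits in a way that looks statistically generic, and a digit-$2$-free string can be mapped to a string with arbitrarily many fresh digits, so bounded-complexity obstructions do not apply. A softer, quantitative substitute would be equidistribution: a uniformly random ternary string of length $\ell$ avoids the digit~$2$ with probability $(2/3)^{\ell}$, and $2^n$ has $\ell\approx n\log_3 2\approx 0.63\,n$ ternary digits, so the ``expected number'' of counterexamples beyond any bound $N$ is of order $\sum_{n>N}(2/3)^{0.63\,n}<\infty$, consistent with only finitely many exceptions; but promoting this heuristic to a theorem needs effective equidistribution of $(2^n\bmod 3^k)_n$ strong enough to force a digit~$2$ once $n$ is large, and that is beyond current technology. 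One may also phrase the problem Diophantinely --- a digit-$2$-free $2^n$ satisfies $2^n=\sum_{i\in S}3^i$, an $S$-unit equation --- but the Subspace Theorem yields finiteness only when the number of summands is bounded, whereas here $|S|$ may grow like $0.63\,n$. Hence I expect the main obstacle to any direct proof to be precisely this: each route demands a genuinely new equidistribution or invariance input for the ternary digits of $2^n$.

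Since a direct proof is out of reach, the step I would actually carry out --- and the one the paper carries out --- is to make the conjecture finite. Build an explicit Turing machine that, started on blank tape, maintains the ternary expansion of $2^n$, repeatedly applies the carry-propagating doubling transducer above, scans the tape for a digit~$2$, and halts exactly when it reaches an $n>8$ that is digit-$2$-free, i.e.\ a counterexample (the \mulTwoF and \mulTwoG doubling phases, the \checkTwo scan, the \rewind, and the \checkHalt test below presumably realise this loop). Fitting it into $15$ states with $2$ symbols --- and, in a second construction, into $5$ states with $4$ symbols --- then shows that any counterexample occurs within $\min(\BB(\beatit),\BB(5,4))$ steps, so Conjecture~\ref{conj:Erdos} becomes equivalent to the finite statement that no $n$ with $8<n\le\min(\BB(\beatit),\BB(5,4))$ is digit-$2$-free. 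Within this route the real obstacle is not mathematical depth but encoding economy: packing initialisation, carry propagation, the digit search, the rewind, and the halting test into so few states and symbols, and then proving the tiny resulting machine does exactly what is claimed --- which is where the bulk of the construction and its correctness argument will go.
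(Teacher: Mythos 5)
You are right that Conjecture~\ref{conj:Erdos} is an open problem with no proof in the paper, and your carry-transducer (states tracking carry $0$ or $1$, digit $d$ mapped to $(2d+c)\bmod 3$) is exactly the paper's two-state \mulTwo FST of Figure~\ref{fig:fst}, while your finitization via $\min(\BB(\beatit),\BB(5,4))$ is precisely Corollary~\ref{cor:ErdosIsFinite}. Your identification of the exceptional exponents $n\in\{0,2,8\}$ and of the construction/correctness burden as the real work also matches the paper, so this is the same approach correctly described.
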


\noindent In Section~\ref{sec:erdos} we discuss existing literature on the conjecture as well as its relationship with the Collatz and weak Collatz conjectures \cite{lagariasErdos,DUPUY2016268,dimitrov2021powers,TerenceTaoBlog}.

The main technical contribution of this paper is to prove the following theorem:
\ifllncs
  \vspace{-2.5ex}
\fi
\begin{restatable}[]{theorem}{thmtwosymbols}
  \label{thm:two}
  \normalfont
  There is an explicit $\beatit$-state $2$-symbol Turing machine that
  halts if and only if Erd\H{o}s' conjecture is false.
\end{restatable}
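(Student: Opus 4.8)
\noindent The plan is to build a Turing machine that carries out the obvious semi-decision procedure for the \emph{negation} of \Eds' conjecture: enumerate the powers $2^9, 2^{10}, 2^{11}, \dots$, each written in base~$3$, and halt as soon as one of them contains no digit~$2$. Starting the enumeration at $n=9$ matches the range $n>8$ of the conjecture -- and is necessary, since the digit-$2$-free powers with $n\le 8$, namely $2^0$, $2^2$ and $2^8$, must be left out -- so this procedure terminates if and only if \Eds' conjecture is false. Everything then comes down to packing the procedure into $\beatit$ states over $2$ symbols and checking that the resulting transition table does what it should.

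Concretely, \Mt keeps on its tape the base-$3$ representation of the current power $2^n$, never storing $n$ itself. One iteration of its main loop does two things: first, subroutine \checkTwo scans the stored number for a digit~$2$, and if none is found it passes control to \checkHalt, which halts the machine; otherwise \mulTwo multiplies the stored number by~$2$ in base~$3$, turning $2^n$ into $2^{n+1}$, and the loop repeats. Multiplication by~$2$ in base~$3$ is a finite-state transduction swept from the least- to the most-significant digit carrying a bit $c\in\{0,1\}$: a digit $d$ becomes $(2d+c)\bmod 3$ with outgoing carry $\lfloor (2d+c)/3\rfloor$, and if the carry escapes past the current leading digit the number acquires a new leading digit~$1$; this is the two-state transducer \mulTwo, with carry-$0$ mode \mulTwoF and carry-$1$ mode \mulTwoG. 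Orienting the digit search and the doubling sweep in opposite directions along the tape makes each pass finish where the next one starts, so that beyond a short initialisation phase -- which writes the fixed string encoding $2^9$ and positions the head -- only the helper \rewind is needed to reposition the head between phases.

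To do this with only $2$ tape symbols, each base-$3$ digit is stored as a $2$-bit block, one of the four $2$-bit codes being reserved for blank tape so that the number is self-delimiting; blocks are read and written one bit at a time, which is why each subroutine needs a handful of states rather than a single one. A natural way to organise the construction is to first build a $5$-state, $4$-symbol machine \Mf that runs the procedure directly in base~$3$ -- which in turn pins down $\BB(5,4)$ -- and then compile it to $2$ symbols using roughly three states for each $4$-symbol state (read the first bit, read the second bit, then write and move), arriving at $5\cdot 3=\beatit$ states. Correctness is then an induction on loop iterations: assuming that at the start of iteration $i$ the tape encodes the base-$3$ digits of $2^{8+i}$ flanked by blanks with the head at the designated end, one checks that \checkTwo, \mulTwo (via \mulTwoF and \mulTwoG), \checkHalt and \rewind together either halt -- exactly when that power is digit-$2$-free -- or re-establish the invariant for $2^{9+i}$. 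Because \Mt is presented as an explicit transition table, this amounts to a finite check that every table entry realises the intended block-level behaviour, most cleanly carried out by tracing each subroutine on a generic $2$-bit block and on the two boundary situations -- a leading blank carrying a pending~$1$, and the blank that marks the low end of the number -- and by confirming that \checkHalt holds the machine's only halting transition.

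The main obstacle is the state budget. A naive implementation, or an unoptimised compilation of \Mf, uses well over $\beatit$ states, so the states devoted to \mulTwoF, \mulTwoG, \checkTwo, \checkHalt, \rewind and the initialisation must be heavily shared -- a single state being reused whenever the direction of travel together with the currently relevant data (the carry, whether a digit~$2$ has already been seen, and the position within a $2$-bit block) coincide -- and the boundary cases must be folded into existing states rather than given fresh ones. Squeezing the construction down to exactly $\beatit$ states while keeping the inductive invariant intact is where essentially all of the work lies; once the transition table is fixed, the ``halts if and only if \Eds' conjecture is false'' equivalence, and hence the theorem, follows at once.
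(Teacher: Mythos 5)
Your overall architecture is the same as the paper's: a two-state finite-state transducer for $x\mapsto 2x$ in base~3 (your carry-$0$/carry-$1$ modes are exactly the paper's states $\FSTF$ and $\FSTG$ of the \mulTwo FST), an intermediate $5$-state $4$-symbol machine \Mf built from \mulTwoF, \mulTwoG, \checkTwo, \checkHalt and \rewind, a compilation to $2$ symbols via a $2$-bit block encoding of the four symbols with roughly three states per column (a \emph{simulation} lemma proved by induction on the steps of \Mf), and a final equivalence ``\Mt halts iff \Mf halts iff \Eds' conjecture is false''. That matches the paper's proof of Theorems~\ref{thm:two} and~\ref{thm:four} essentially step for step.

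There is, however, one concrete design decision where your plan would fail, and it is precisely one of the places where the paper has to be clever. You propose to exclude the cases $n\le 8$ by an ``initialisation phase which writes the fixed string encoding $2^9$''. In reverse ternary $2^9=512=200222_3$ occupies six digits, i.e.\ twelve bits under the $2$-bit encoding; a $2$-symbol machine started on blank tape has nothing to branch on, so writing a specific twelve-bit seed costs on the order of twelve dedicated states before the main loop even begins. That alone blows the budget of $\beatit$ states, and no amount of state-sharing with the main loop rescues it, because the seed-writing states must encode twelve bits of information that appear nowhere else in the program. The paper instead starts the enumeration at $2^0$ (which the machine reaches ``for free'' in five steps from the blank tape) and reserves a single tape cell at position $-1$ as a counter: every time \checkTwo fails to find a digit $2$, control passes to \checkHalt, which \emph{increments} that cell and only halts when it already holds the value $2$, i.e.\ on the fourth digit-$2$-free power encountered. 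Since exactly three such powers ($2^0$, $2^2$, $2^8$) are known to exist below the threshold, the machine halts exactly on the first counterexample with $n>8$. This counter trick costs essentially nothing beyond the \checkHalt column you already budgeted for, and it is what makes the $n>8$ clause compatible with $\beatit$ states. With that substitution your argument goes through; without it, the stated state count is not achievable by the route you describe.
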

\noindent That Turing machine, called $\Mt$,  is given in Figure~\ref{fig:two} and proven correct in Section~\ref{sec:two}.
The proof shows that $\Mt$ {\em simulates}, in a tight (linear time) fashion, an intuitively simpler machine with 5 states and 4 symbols, that we call $\Mf$
(Figure~\ref{fig:four})
and whose behaviour is proven correct in Section~\ref{sec:four}:

\begin{restatable}[]{theorem}{thmfoursymbols}
  \label{thm:four}
  \normalfont
  There is an explicit $5$-state $4$-symbol Turing machine that
  halts if and only if Erd\H{o}s' conjecture is false.
\end{restatable}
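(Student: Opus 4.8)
The plan is to display \Mf explicitly (Figure~\ref{fig:four}) and to prove that, launched on blank tape, it performs the following search. It keeps the base-$3$ representation of $2^n$ on its tape --- the least significant digit at a fixed anchor cell, the written region delimited by a marker --- starting from a fixed exponent $n_0 > 8$ for which \Eds' conjecture is already known to hold; concretely $n_0 = 9$ works, since $2^9 = 200222_3$ contains a digit~$2$. It then repeats three steps: (i)~double the tape contents in base~$3$, passing from $2^n$ to $2^{n+1}$; (ii)~scan the new contents for an occurrence of the digit~$2$; (iii)~if some digit~$2$ occurs, continue with $n+1$, and if none does, halt. This search halts iff there exists $n \ge n_0 > 8$ such that $2^n$ has no base-$3$ digit~$2$, that is, iff \Eds' conjecture is false, so it suffices to show that \Mf implements it. Fixing the alphabet $\{0,1,2,\star\}$, with $0$ doubling as the blank symbol, $1$ and $2$ the nonzero base-$3$ digits, and $\star$ the marker, I would first check the short initial segment of the run that converts blank tape into the marked representation of $2^{n_0}$; this absorbs the ``$n>8$'' hypothesis into the starting configuration and removes any need for a separate counter.

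The core is a family of ``phase lemmas'', one for each named subroutine of \Mf: \rewind, \mulTwo in its two carry-forms \mulTwoF and \mulTwoG, \checkTwo, and \checkHalt. These are glued together by an invariant on the configurations at which control passes from one phase to the next: at the start of each round the tape holds exactly the marked base-$3$ representation of $2^n$ for the current $n \ge n_0$, with the head at a prescribed endpoint. From there, \rewind moves the head to the anchor without altering the tape; \mulTwo sweeps from the anchor toward the most significant end, rewriting each digit $d$ to $(2d+c) \bmod 3$, where the incoming carry $c \in \{0,1\}$ is recorded precisely by whether the machine is in state \mulTwoF (so $c=0$) or \mulTwoG (so $c=1$), and when a carry survives past the top digit it writes a fresh most significant digit~$1$ and pushes the marker one cell outward, so that afterwards the tape holds $2^{n+1}$ in base~$3$; \checkTwo sweeps back and decides whether any digit~$2$ occurs; and \checkHalt fires the unique halting transition exactly when \checkTwo reports ``no digit~$2$'', and otherwise returns control to \rewind with $n$ incremented. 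Composing these four contracts re-establishes the invariant with $n \mapsto n+1$, and since \checkHalt is the only route to halting, Theorem~\ref{thm:four} follows by induction on the round index.

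The step I expect to be the main obstacle is fitting all of this into five states: the carry bit, the ``a digit~$2$ has already been seen'' bit, and the sweep direction all have to be multiplexed into the state alongside the identity of the current subroutine, which forces the proof to push as much information as possible onto the tape --- in particular onto the marker symbol $\star$ and onto the rewriting of already-processed cells. The genuinely delicate case analysis concerns the cells next to the marker: creating the new top digit and relocating $\star$ when \mulTwo carries out of the most significant position, and stopping \rewind and \checkTwo cleanly at the anchor, are exactly the places where an off-by-one would silently corrupt the invariant, so the proof must check each such boundary transition directly against the transition table of Figure~\ref{fig:four}. Everything else --- the elementary base-$3$ doubling arithmetic and the termination of each individual finite sweep --- is routine, so Theorem~\ref{thm:four} follows once the phase lemmas are in hand.
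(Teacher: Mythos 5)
Your high-level loop (double in base 3, scan for a digit 2, halt only if none is found) is the right skeleton, and your phase-lemma/invariant structure is essentially how the paper argues. But there is a genuine gap at exactly the point you flag as ``the main obstacle'': you propose to absorb the hypothesis $n>8$ by bootstrapping the blank tape into the representation of $2^{9}=200222_3$ and then dispensing with any counter, yet you never show that writing a hard-coded six-digit string onto a blank tape can be done within the 5-state, 4-symbol budget while still leaving enough transitions for \rewind, the two carry states of \mulTwo, \checkTwo and \checkHalt. Initialising a fixed multi-symbol string from blank tape ordinarily costs a chain of dedicated states, and with only $5\times 4=20$ transitions available this is precisely where the construction lives or dies. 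The paper's machine \Mf avoids the problem entirely by starting from $2^{0}$ (its initial state \mulTwoG happens to write the seed for free) and instead keeping a one-cell counter at tape position $-1$: each time a power of $2$ with no ternary digit $2$ is encountered, \checkHalt increments that cell ($\#\to 0\to 1\to 2$), and the machine halts only upon reading a $2$ there, i.e.\ on the \emph{fourth} such occurrence. Since exactly three occurrences ($2^{0}=1_3$, $2^{2}=11_3$, $2^{8}=100111_3$) are known, the fourth is by construction the smallest counterexample with $n>8$. This counter is the missing idea in your proposal; without it (or a demonstrably cheap initialisation to $2^{9}$), your argument does not yield a machine with only five states.

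A secondary mismatch: you describe an alphabet $\{0,1,2,\star\}$ with $0$ doubling as the blank and a movable end-marker $\star$, whereas \Mf uses $\{0,1,2,\#\}$ with $\#$ as the blank acting both as the separator at position $0$ and as the implicit right delimiter (the first blank after the digits terminates the \mulTwo sweep, with the trailing $\#$ read as a ternary $0$ so that the carry-out creates the new leading digit automatically). Conflating $0$ with blank would break the ``stop the sweep at the first blank'' mechanism, since reverse-ternary representations of $2^{n}$ legitimately contain interior $0$s. Neither issue affects your induction scheme itself, which is sound once the correct machine and starting configuration are in place, but as written the proposal does not prove the theorem for any explicitly exhibited $5$-state machine.
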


From these theorems, we get that knowing the value of $\BB(\beatit)$ is at least as hard as solving Erd\H{o}s' conjecture since $\BB(\beatit)$ gives a finite---although particularly impractical---algorithmic procedure to decide the conjecture as it upperbounds the set of values that need to be considered:

\begin{restatable}[]{corollary}{ErdosIsFinite}
  \label{cor:ErdosIsFinite}
  \normalfont
  Erd\H{o}s' conjecture is equivalent to the following conjecture over a finite set: for all $8 < n \leq \min (\BB(\beatit), \BB(5,4)  )$ there is at least one digit $2$ in the base $3$ representation of~$2^n$.
\end{restatable}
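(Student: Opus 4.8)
The plan is to derive Corollary~\ref{cor:ErdosIsFinite} as a direct consequence of Theorems~\ref{thm:two} and~\ref{thm:four} together with the definition of the busy beaver function. First I would observe that the finite conjecture is trivially implied by Erd\H{o}s' conjecture, since it merely restricts the universally quantified statement to a finite range of $n$; so the only content is the converse direction. For that, I would argue by contraposition: suppose Erd\H{o}s' conjecture is false, i.e.\ there exists some $n>8$ such that $2^n$ has no digit~$2$ in base~$3$, and let $n_0$ be the least such $n$. The goal is to show $n_0 \leq \min(\BB(\beatit),\BB(5,4))$, which immediately contradicts the finite conjecture and completes the proof.

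The key step is to connect $n_0$ to the running times of the two machines. By Theorem~\ref{thm:two}, the machine $\Mt$ halts if and only if Erd\H{o}s' conjecture is false; since we are assuming it is false, $\Mt$ halts. Here I would need the slightly stronger property, stated in the discussion following Theorem~\ref{thm:four} (``if they find one it is the smallest such counterexample''), that $\Mt$ halts only after having processed the counterexamples in increasing order of $n$, so that its halting time is at least $n_0$ (indeed, intuitively, it must do at least one step of work per candidate value $9,10,\dots,n_0$, so its step count is at least $n_0 - 8 \geq n_0$ is too crude — rather, its step count is at least $n_0$ for the simpler reason that it cannot examine the candidate $2^{n_0}$ in base~$3$ without running for at least $n_0$ steps). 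Since $\Mt$ is a $\beatit$-state $2$-symbol halting machine, its halting time is at most $\BB(\beatit)$ by definition of the busy beaver function, hence $n_0 \leq \BB(\beatit)$. The identical argument applied to $\Mf$ via Theorem~\ref{thm:four} gives $n_0 \leq \BB(5,4)$. Combining, $n_0 \leq \min(\BB(\beatit),\BB(5,4))$, as required.

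The main obstacle, and the only place where care is needed, is justifying the inequality ``halting time $\geq n_0$'': this relies on the internal structure of the constructed machines rather than on the black-box statements of Theorems~\ref{thm:two} and~\ref{thm:four}, so I would either (a) appeal explicitly to the correctness proofs in Sections~\ref{sec:two} and~\ref{sec:four}, which establish that the machines enumerate candidates $n = 9, 10, 11, \dots$ in order and spend a positive number of steps on each, giving a step count that grows at least linearly in the candidate index and in particular exceeds $n_0$; or (b), if one prefers a fully black-box argument, note that it suffices that the machines halt \emph{at all} together with the fact that $\BB$ dominates every computable function — but that weaker route does not give the clean bound $\min(\BB(\beatit),\BB(5,4))$ and so I would take route (a). Everything else is immediate from the definitions, so the corollary follows.
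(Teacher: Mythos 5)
Your proposal is correct and follows essentially the same route as the paper: both arguments rest on the observation that the machines enumerate powers of $2$ in increasing order at fewer than one power per step, so the exponent of the smallest counterexample is bounded by the halting time, which in turn is bounded by $\BB(\beatit)$ and $\BB(5,4)$ by definition. Your contrapositive framing versus the paper's ``run for $\BB+1$ steps and observe the tape'' framing is only a cosmetic difference.
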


Finally, we also get an upper bound on the smallest counterexample to Erd\H{o}s conjecture, if it exists:

\begin{restatable}[]{corollary}{BBlowerBound}
  \label{cor:BBlowerBound}
  \normalfont
  Let $x\in\N$ be the smallest counterexample to Erd\H{o}s conjecture, if it exists. Then we have: $\BB(15) \geq \log_2 x$ and $\BB(5,4) \geq \log_2 x$.
\end{restatable}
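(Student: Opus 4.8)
The plan is to combine Theorems~\ref{thm:two} and~\ref{thm:four} with one elementary fact about Turing machines: in $t$ steps the head visits at most $t+1$ distinct tape cells, so if a machine has halted after $t$ steps with a word of length $\ell$ on its tape, then $t\geq\ell-1$. Assume the smallest counterexample $x$ exists, as otherwise Corollary~\ref{cor:BBlowerBound} is vacuous. By Theorem~\ref{thm:two}, Erd\H{o}s' conjecture being false then implies that $\Mt$ halts; let $T$ be its running time. Since $\Mt$ has $\beatit$ states and $2$ symbols and halts from blank tape, $\BB(\beatit)\geq T$ by definition of the busy beaver value; symmetrically, by Theorem~\ref{thm:four}, $\Mf$ halts in some number $T'$ of steps and $\BB(5,4)\geq T'$. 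It thus suffices to show $T\geq\log_2 x$ and $T'\geq\log_2 x$.

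First I would extract, from the correctness proofs in Sections~\ref{sec:two} and~\ref{sec:four}, the shape of the halting configuration: both machines examine the exponents $n=1,2,3,\dots$ in increasing order while maintaining the base-$3$ representation of $2^n$, and they halt exactly upon reaching $n=x$, with the base-$3$ representation of $2^x$ present on the tape. That representation occupies $N:=\lfloor x\log_3 2\rfloor+1$ consecutive cells of $\Mf$'s tape, and at least $N$ cells of $\Mt$'s tape (since $\Mt$ simulates $\Mf$ tightly, storing each simulated digit in at least one physical cell). In both computations the head has visited every one of these cells --- they were written, and then scanned to certify that no digit equals $2$ --- so the ``$t+1$ cells'' bound gives $T\geq N-1$ and $T'\geq N-1$. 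Finally, a short calculation shows $N-1=\lfloor x\log_3 2\rfloor\geq\log_2 x$ for every $x>8$: the function $x\mapsto x\log_3 2-\log_2 x$ is increasing for $x\geq 9$ and already exceeds $1$ there, so $x\log_3 2\geq\log_2 x+1$ and the floor still dominates $\log_2 x$. Since $x$ is a counterexample to a conjecture quantified over $n>8$ we indeed have $x>8$, and hence $\BB(\beatit)\geq T\geq\log_2 x$ and $\BB(5,4)\geq T'\geq\log_2 x$.

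The main obstacle is not the argument but the bookkeeping it relies on: one must make sure the constructions and correctness proofs of Theorems~\ref{thm:two} and~\ref{thm:four} are phrased so as to expose the relevant coarse feature of the halting tape, namely that the entire base-$3$ expansion of $2^x$ is written there (and, for $\Mt$, how many physical cells its encoding spans). Everything else is routine: the $\log_2 x$ bound is deliberately very loose --- the true running times are super-linear, in fact roughly quadratic, in $x$ --- precisely so that it follows from this coarse feature alone, and the only delicate spot is the rounding in passing from ``$N$ cells visited'' to ``$\geq N-1$ steps'' and then to $\log_2 x$, which is where the hypothesis $x>8$ is spent.
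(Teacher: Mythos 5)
Your argument hinges on reading ``the smallest counterexample $x$'' as the exponent $n$, but the paper's own proof makes clear that $x$ denotes the power of two itself: it sets $x=2^{n_0}$, the number left on the tape at halting, so the bound to be proved is $\BB(15),\BB(5,4)\geq\log_2 x=n_0$. (That is also the only reading under which the corollary says something substantive; under your reading the claim $\BB\geq\log_2 n_0$ is nearly vacuous.) Under the intended reading your space-counting argument falls short by a constant factor: the halting tape of $\Mf$ holds the \emph{base-3} representation of $2^{n_0}$, which has only $\lceil n_0\log_3 2\rceil\approx 0.63\,n_0$ digits, so the bound ``at most $t+1$ cells visited in $t$ steps'' yields $T'\geq\lfloor n_0\log_3 2\rfloor$, which is strictly smaller than the target $n_0$ for every $n_0\geq 1$. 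The factor you lose, $\log_3 2$, is exactly the mismatch between the base-2 logarithm in the statement and the base-3 encoding on the tape, so no tightening of the cell count (nor the factor-2 encoding used by $\Mt$) closes this gap.

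The paper instead counts time rather than space: both machines generate the powers $2^1,2^2,\dots$ in succession, and each new power costs at least one step (in fact far more), so reaching $2^{n_0}$ and halting takes at least $n_0$ steps, whence $T,T'\geq n_0=\log_2 x$. Your overall frame is exactly right --- halting makes $\Mt$ and $\Mf$ busy beaver contenders, so $\BB(15)\geq T$ and $\BB(5,4)\geq T'$, and it remains to lower-bound the running times using the correctness proofs --- but the quantity to bound them by should be the number of iterations of the \mulTwo loop, not the length of the final tape word.
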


\ifllncs
  \textbf{Discussion.}
\else
  \subparagraph*{Discussion.}
\fi
Why are these theorems important? They give us more knowledge on the busy beaver function: we get that values as small as BB(15) are related to hard open problems in mathematics and thus correspondingly hard to know. Our results make $\BB(\beatit)$ the smallest busy beaver value linked to an open problem in mathematics, a significant improvement on the $\BB(\text{4,888})$ and $\BB(\text{5,372})$ results cited earlier (that have a proof of correctness), and the
$\BB(27)$ and $\BB(744)$ results (that as yet lack a published proof of correctness).

An even more drastic way to establish hardness is to find an $n$ for which $\BB(n)$ is independent of some standard set of axioms such as PA, or ZFC. In that spirit, a 7,910-state machine whose halting problem is independent of ZFC was constructed \cite{Yedidia2016} and this was later improved to 748 states~\cite{BB748Thesis,BusyBeaverFrontier}. The $748$-state machine explicitly looks for a contradiction in ZFC (such as a proof of $0=1$) which is, by G\"{o}del's second incompleteness theorem, independent of ZFC. Aaronson~\cite{BusyBeaverFrontier} conjectures that $\BB(10)$ is independent of PA and $\BB(20)$ is independent of ZFC meaning that the frontier between knowable and unknowable busy beaver values could be as low as $\BB(10)$ if we limit ourselves to typical inductive proofs. With our work, any independence result on Erd\H{o}s' conjecture would automatically transfer to busy beaver value BB(15) and higher.
\ifllncs
  \\ \indent
\fi
As for Erd\H{o}s' conjecture, Corollary~\ref{cor:ErdosIsFinite} gives a finite bound on the number of counter-examples needed to check in order to prove or refute the conjecture. However, this is not a practical bound: first, we don't know its value, and second, it must be so astronomical as to not give a tractable procedure to prove or disprove the conjecture.
Moreover, since we now know that BB(15) {\em embeds} Erd\H{o}s' conjecture, and that it could embed other horrible beasts, finding BB(15) itself could indeed be much harder than answering the conjecture.
However, it is not unthinkable that methods for automatically deciding
whether certain Turing machines halt or not \cite{bbchallengeBusyBeaver,Marxen_1998,Skelet_website} could one day solve the halting problem of the machines constructed in this article (or future improved machines with less states) and thus solve Erd\H{o}s' conjecture that way.
This approach  might highlight unforeseen links between theoretical computer science and number theory.
\ifllncs
  \\ \indent
\fi
As we've noted, some claimed results on the busy beaver function come with proofs, but some do not~\cite{BusyBeaverFrontier}. We advocate for proofs, although we acknowledge the challenges in providing human-readable correctness proofs for small programs that are, or almost are, program-size optimal.\footnote{The situation can be likened to the hunt for small, and fast, universal Turing machines~\cite{WoodsNearySurvey}, or simple models like Post tag systems, with earlier literature often missing proofs of correctness, but some later papers using induction on machine configurations to do the job, for example refs~\cite{neary2009four,WoodsNeary2006B}.}
Here, the proof technique for correctness of our $\BB(\beatit)$ candidate $\Mt$ amounts to proving by induction that $\Mt$ \textit{simulates} (in a tight  sense via Definition~\ref{def:sim} and Lemma~\ref{lem:sim}) another Turing machine $\Mf$ (giving Theorem~\ref{thm:two}).
$\Mf$ exploits the existence of a tiny finite state transducer (FST), in Figure~\ref{fig:fst}, for multiplication by~2 in base~3. We directly prove  $\Mf$'s correctness by induction on its time steps (giving Theorem~\ref{thm:four}).
\ifllncs
  \\ \indent
\fi
A Turing machine simulator, \texttt{bbsim}, was built in order to test our constructions\footnote{\label{ft:onlineSimulator}Simulator and machines available here: \url{https://github.com/tcosmo/bbsim}}. The reader is invited to run the machines of this paper through the simulator.

\subsection{Erd\H{o}s' conjecture and its relationship to the Collatz and weak Collatz conjectures}\label{sec:erdos}

\begin{figure}[t]
  \center
  \ifllncs
    \includegraphics[width=0.6\textwidth]{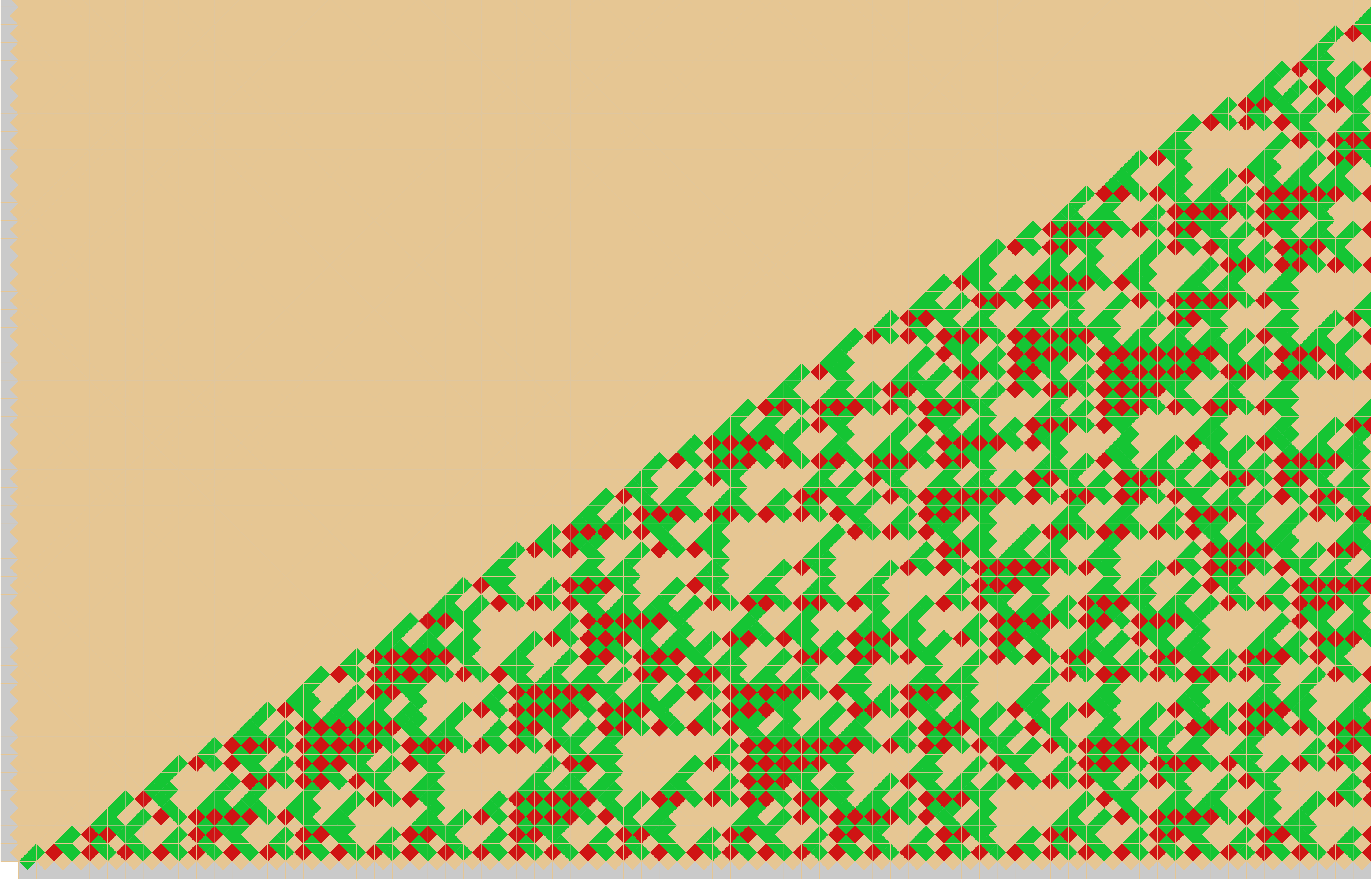}
  \else
    \includegraphics[width=0.9\textwidth]{pow_2_cropped.pdf}
  \fi
  \caption{The first 75 powers of two assembled in base 3 by the size-6 Wang tile set introduced in~\cite{thesis_sixtiles,mawatam}.
    Reading left-to-right, each column of glues (colours) corresponds to a power of two:
    beige glues represent ternary digit $0$, green glues $1$ and red glues $2$.
    For instance, the rightmost column encodes (from top to bottom) $110210021020202011202012000020112001021021222022_3 = 2^{75}$. The complexity of the patterns illustrates the complexity of answering Erd\H{o}s conjecture which amounts to asking if each glue-column (except for the few first) has at least one red glue.}\label{fig:erdos}
  \ifllncs
    \vspace{-3ex}
  \fi
\end{figure}

A first obvious fact about Conjecture~\ref{conj:Erdos} is that the bound $n > 8$ is set so as to exclude the three known special cases: $1$, $4$ and $256$ whose ternary representations are respectively $1$, $11$ and $100111$.
Secondly, since having no digit $2$ in ternary is equivalent to being a sum of distinct powers of three, Conjecture~\ref{conj:Erdos} can be restated as: for all $n>8$, the number $2^n$ is not a sum of distinct powers of~$3$.

The conjecture has been studied by several authors. Notably, Lagarias~\cite{lagariasErdos} showed a result stating that, in some sense, the set of powers of $2$ that omits the digit $2$ in base three, is small. In \cite{DUPUY2016268}, the authors showed that, for $p$ and $q$ distinct primes, the digits of the base $q$ expansions of $p^n$ are equidistributed on average (averaging over $n$) which in our case suggests that digits $0,1,2$ should appear in equal proportion in the base 3 representation of $2^n$. Dimitrov and Howe \cite{dimitrov2021powers} showed that $1$, $4$ and $256$ are the only powers of $2$ that can be written as the sum of at most twenty-five distinct powers of $3$.

The Collatz conjecture states that iterating the Collatz function $T$ on any $x\in\mathbb{N}$ eventually yields 1, where $T(x) = x/2$ if $x$ is even and $T(x) = (3x+1)/2$ if $x$ is odd. The weak Collatz conjecture states that if $T^k(x) = x$ for some $k\geq1$ and $x$ a natural number then $x\in\{0,1,2\}$.

Although solving the weak Collatz conjecture, given current knowledge, would not directly solve Erd\H{o}s' conjecture,
intuitively, Erd\H{o}s' conjecture seems to be the simpler problem of the two. Indeed, Tao~\cite{TerenceTaoBlog} justifies calling Erd\H{o}s' conjecture a ``toy model'' problem for the weak Collatz conjecture by giving the number-theoretical reformulation that there are no integer solutions to $2^n = 3^{a_1} + 3^{a_2} + \dots + 3^{a_k}$ with $n > 8$ and $0 \leq a_1 < \dots < a_k$, which in turn seems like a simplification of a statement equivalent to the weak Collatz conjecture, also given in \cite{TerenceTaoBlog} (Conjecture 3; Reformulated weak Collatz conjecture).

The three conjectures have been encoded using the same size-6 Wang tile set by Cook, Stérin and Woods~\cite{thesis_sixtiles,mawatam}.
As Figure~\ref{fig:erdos} shows, their construction illustrates the complexity of the patterns occurring in ternary representations of powers of 2 which gives a sense of the complexity underlying Erd\H{o}s' conjecture. Beyond making complex, albeit pretty, pictures there is deeper connection here: the small tile set of~\cite{thesis_sixtiles,mawatam} can be shown to simulate the base 3, \mulTwo, Finite State Transducer that we introduce in Section~\ref{sec:four}, Figure~\ref{fig:fst}, and our small Turing machines use it to look for counterexamples to Erd\H{o}s' conjecture. The tile set also simulates  the \textit{inverse} of the \textbf{mul2} FST {(which computes the operation $x\mapsto x/2$ in ternary)} which was used to build a 3-state 4-symbol non-halting machine that runs the Collatz map on any ternary input \cite{michel2014simulation} and finally, {it also simulates} the \textit{dual} of that FST (which computes the operation $x\mapsto 3x + 1$ in binary) which can be used to simultaneously compute the Collatz map both in binary and ternary~\cite{Collatz2}. So these three closely-related FSTs are all encoded within that small tile set, that in turn encodes the three conjectures.

\subsection{Future work}
This paper opens a number of avenues for future work.

We've given a finite bound on the first counterexample to Erd\H{o}s' conjecture, an obvious future line of work is to improve that cosmologically large bound by shrinking the program-size of our 15-state machine. An avenue for reducing the number of states needed to encode Erd\H{o}s' conjecture is to study small Turing machines and look for machines with the expected behaviour ``in the wild''\footnote{Such an effort has been done for 5-state Turing machines \cite{bbchallengeBusyBeaver}.}.

It would be interesting to design small Turing machines that look for counterexamples to the weak Collatz conjecture. That way we could relate the fact of knowing busy beaver values to solving that  notoriously hard conjecture. We have already found 124-state 2-symbol, and 43-state 4-symbol, machines that look for such counterexamples\footnote{These machines are also available here: \url{https://github.com/tcosmo/bbsim}} but we would like to further optimise them (i.e. reduce their number of states) before formally proving that their behaviour is correct.

There are certain other open problems amenable to BB-type encodings.
For instance, the Erd\H{o}s–Mollin–Walsh conjecture states that there are no three consecutive powerful numbers, where $m\in\mathbb{N}$ is {\em powerful} if $m = a^2 b^3$ for some $a,b\in\mathbb{N}$. It would not be difficult to give a small Turing Machine, that enumerates unary encoded natural numbers, i.e.\ candidate $m$ and $a,b$ values, and runs the logic to check for counterexamples. Algorithms that bounce back and forth on a tape marking off unary strings have facilitated the finding of incredibly small universal Turing machines~\cite{neary2009four,rogozhin1996small,WoodsNearySurvey}, so stands a chance to work well here too, although we wouldn't expect it to beat our main results in terms of program-size.

In fact, any problem that can be expressed as a $\Pi_1$ sentence~\cite{BusyBeaverFrontier} should be relatable to a value of $\BB$ in the same way that we did for Erd\H{o}s' conjecture, although many problems would presumably yield unsuitably large program-size.
It is also worth noting that some problems seem out of reach of the busy beaver framework: for instance the Collatz conjecture, as to date, there are no known ways for an algorithm to recognise whether or not an arbitrary trajectory is divergent, making it seemingly impossible for a program to search for a divergent counterexample\footnote{Said otherwise: given current knowledge there are no reason to believe that the set of counterexamples to the Collatz conjecture is recursively enumerable.}. Aaronson~\cite{BusyBeaverFrontier} gives a number of open problems and conjectures on the structure of BB functions.

\ifllncs
  \vspace{-1ex}
\fi
\section{Definitions: busy beaver Turing machines}

Let $\mathbb{N} = \{0,1,2\dots\}$ and $\mathbb{Z}=\{\dots, -1, 0, 1, \dots\}$.
We consider Turing machines that  are deterministic, have a single bi-infinite tape with tape cells indexed by $\mathbb{Z}$,
and a finite alphabet of $k$ tape symbols that includes a special `blank' symbol (we use $\#$ or $b$). For readability, we use bold programming-style names, for example \checkHalt.
We use `(init)' to denote the initial state.  Each state has $k$ transitions, one for each of the $k$ tape symbols that might be read by the tape head, and a transition is either (1) the halting instruction `Halt' or (2) a triple: write symbol,  tape head move direction (L or R), and  next state. In the busy beaver setting used throughout this paper, we start machines in their initial state on an all-blank tape and, at each time step, according to what symbol is read on the tape, the specified transition is performed, until `Halt' is encountered (if ever).

Let $\text{TM}(n,k)$ be the set of such $n$-state, $k$-symbol Turing machines. Given a machine $M \in \text{TM}(n,k)$, let $s(M)$ be the number of transitions it executes before halting, including the final Halt instruction\footnote{As in \cite{BusyBeaverFrontier}, we took the liberty of not defining a symbol to write and a direction to move the tape head to when the halting instruction is performed as this does not change the number of transitions that the machine executed.} and let $s(M) = \infty$ if $M$ does not halt. Then, $\text{BB}(n,k)$ is defined~\cite{BusyBeaverFrontier} by
\[ \text{BB}(n,k) = \max_{M \in \text{TM}(n,k), \; s(M) < \infty} s(M) \]
In other words, $\text{BB}(n,k)$ is the number of steps by a machine in $\text{TM}(n,k)$ that runs the longest without halting. By convention,
$\text{BB}(n) = \text{BB}(n,2)$, this being the most classic and well-studied busy beaver function.

Some conventions: A busy beaver candidate is a Turing machine for which we don't currently {\em know} whether it halts or not on blank input. A busy beaver contender is a Turing machine that halts on blank input.
A busy beaver champion is a Turing machine that halts on blank input in more steps than any other {\em known} machine with the same number of states and symbols.

A Turing machine configuration is given by: the current state, the tape contents, and an integer tape head position.
We sometimes write configurations in the following condensed format: $\textbf{state\_name},\; \dots ***\bolu{*}*** \dots$ with $*$ any tape symbol of the machine and $\,\bolu{\;\,}\,$ for tape head position.
Let $c_1$ and $c_2$ two configurations of some machine $M$ and let $k\in\N$, we write $c_1 \vdash^k_M c_2$ to mean that $M$ transitions from $c_1$ to $c_2$ in $k$ steps. We write  $\vdash$ (without $M$) if $M$ is clear from the context.


\section{Five state, four symbol Turing machine}\label{sec:four}
In this section we prove Theorem~\ref{thm:four}.
To do this we define, and prove correct, a $5$-state $4$-symbol Turing machine that searches for a counterexample to Erd\H{o}s' conjecture, the machine is given in Figure~\ref{fig:four}.
The construction begins with the \mulTwo finite state transducer (FST) in Figure~\ref{fig:fst}.

\begin{figure}[!htb]
\begin{center}
\ifllncs
\vspace{-3ex}
\begin{tikzpicture}[scale=0.15]
\else
\begin{tikzpicture}[scale=0.2]
    \fi
    \tikzstyle{every node}+=[inner sep=0pt]
    \draw [black] (19.1,-22.8) circle (3);
    \draw (19.1,-22.8) node {$\FSTF$};
    \draw [black] (37.1,-22.8) circle (3);
    \draw (37.1,-22.8) node {$\FSTG$};
    \draw [black] (21.585,-21.13) arc (117.76503:62.23497:13.985);
    \fill [black] (34.62,-21.13) -- (34.14,-20.31) -- (33.67,-21.2);
    \draw (28.1,-19.02) node [above] {$2:1$};
    \draw [black] (17.777,-20.12) arc (234:-54:2.25);
    \draw (21.8,-16) node [above] {$0:0$};  
    \fill [black] (20.42,-20.12) -- (21.3,-19.77) -- (20.49,-19.18);
    \draw [black] (20.423,-25.48) arc (54:-234:2.25);
    \draw (21.8,-29.8) node [below] {$1:2$};   
    \fill [black] (17.78,-25.48) -- (16.9,-25.83) -- (17.71,-26.42);
    \draw [black] (35.777,-20.12) arc (234:-54:2.25);
    \draw (39.8,-16) node [above] {$2:2$};  
    \fill [black] (38.42,-20.12) -- (39.3,-19.77) -- (38.49,-19.18);
    \draw [black] (34.41,-24.119) arc (-68.79964:-111.20036:17.448);
    \fill [black] (21.79,-24.12) -- (22.36,-24.87) -- (22.72,-23.94);
    \draw (28.1,-25.8) node [below] {$0:1$};
    \draw [black] (38.423,-25.48) arc (54:-234:2.25);
    \draw (39.8,-29.8) node [below] {$1:0$};   
    \fill [black] (35.78,-25.48) -- (34.9,-25.83) -- (35.71,-26.42);
    \draw [black] (10.8,-22.8) -- (16.1,-22.8);
    \fill [black] (16.1,-22.8) -- (15.3,-22.3) -- (15.3,-23.3);
\end{tikzpicture}
\end{center}
\caption{The \mulTwo Finite State Transducer that multiplies a reverse-ternary represented number (base-3 written in reverse digit order) by $2$. For example, the base 10 number $64_{10}$  is $2101_3$ in base~3, which we represent in reverse-ternary with a leading zero to give the input $10120$, which in turn yields the FST output $20211$; the reverse-ternary of~$128_{10}$. Transition arrows are labelled $r:w$ where $r$ is the read symbol and $w$ is the write symbol.} \label{fig:fst}
\end{figure}

A similar FST, and its `dual', can be used to compute iterations of the Collatz map~\cite{Collatz2} (Appendix~B).
The fact that there is an FST that multiplies by 2 in base 3 is not surprising, since there is one for any affine transformation in any natural-number base $\geq 1$~\cite{baseChanges}.
However, in this section, we will exploit  \mulTwo's small size.
We begin by proving its behaviour is correct (by reverse-ternary we mean the base-3 representation written in reverse digit order):

\begin{lemma}\label{lem:fst}
    \normalfont
    Let $x\in\mathbb{N}$  and let $w = w_1 \dots w_n 0 \in \{0,1,2\}^{n}0$ be its reverse-ternary representation, with $n \geq 0$, and a single leading 0.
    Then, on input $w$ the \mulTwo FST outputs $\gamma = \gamma_1 \gamma_2 \dots \gamma_{n+1} \in \{0,1,2\}^{n+1}$ which represents  $y = 2x$ in reverse-ternary.
\end{lemma}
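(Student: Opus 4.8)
The plan is to prove Lemma~\ref{lem:fst} by induction on $n$, the number of (non-leading-zero) ternary digits, tracking both the output digits produced so far and the current FST state, which encodes the ``carry''.

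First I would establish the invariant that makes the induction go through. The two states $\FSTF$ and $\FSTG$ of the \mulTwo FST correspond to a pending carry of $0$ and $1$ respectively: after reading a prefix $w_1 \dots w_i$ of the reverse-ternary input, having output $\gamma_1 \dots \gamma_i$, the machine is in state $\FSTF$ if $2 \cdot (w_1 \dots w_i)_{\overline{3}} = (\gamma_1 \dots \gamma_i)_{\overline{3}}$ (no carry out of position $i$), and in state $\FSTG$ if $2 \cdot (w_1 \dots w_i)_{\overline{3}} = (\gamma_1 \dots \gamma_i)_{\overline{3}} + 3^{i}$ (a carry of $1$ waiting to be added at position $i+1$). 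Here $(\cdot)_{\overline{3}}$ denotes the value of a reverse-ternary string. The base case $i=0$ is immediate: the FST starts in $\FSTF$ with empty output, and $0 = 0$. For the inductive step, I would do a small case analysis on the digit $w_{i+1} \in \{0,1,2\}$ and the current state: reading digit $d$ in state $\FSTF$ means we must represent $2d$, so we write $2d \bmod 3$ and move to $\FSTF$ if $d=0$ (write $0$) or to $\FSTG$ if $d \in \{1,2\}$ (write $2$ then $1$ respectively, since $2\cdot1 = 2$, $2\cdot2 = 4 = 1 + 3$); reading digit $d$ in state $\FSTG$ means we must represent $2d+1$, so for $d=0$ write $1$ and go to $\FSTF$, for $d=1$ write $0$ ($3 = 0 + 3$) and stay in $\FSTG$, for $d=2$ write $2$ ($5 = 2 + 3$) and stay in $\FSTG$. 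Each of these six cases is exactly one transition arrow in Figure~\ref{fig:fst}, and in each one can check the invariant is preserved by a one-line arithmetic identity.

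Next I would handle the leading zero and termination. After the FST has consumed $w_1 \dots w_n$ it has output $\gamma_1 \dots \gamma_n$ and is in some state $q \in \{\FSTF, \FSTG\}$; then it reads the final symbol $w_{n+1} = 0$. If $q = \FSTF$ there is no carry, so reading $0$ outputs $0$; but then $\gamma_{n+1} = 0$ and the invariant gives $2x = (\gamma_1 \dots \gamma_{n+1})_{\overline{3}}$, as required (the leading $0$ of the output is genuine). If $q = \FSTG$ there is a pending carry of $1$, and reading $0$ in state $\FSTG$ outputs $1$; then $(\gamma_1 \dots \gamma_n)_{\overline{3}} + 3^n = 2x$ gives $(\gamma_1 \dots \gamma_{n+1})_{\overline{3}} = 2x$ since $\gamma_{n+1} = 1$. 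The only subtlety is that reading the leading $0$ is enough to flush the carry: because $w$ has value $< 3^n$, we have $2x < 2\cdot 3^n < 3^{n+1}$, so $2x$ fits in $n+1$ reverse-ternary digits and no further carry can be pending after position $n+1$; equivalently, after this last transition the FST would be in state $\FSTF$ (one checks that reading $0$ in $\FSTG$ goes to $\FSTF$, and reading $0$ in $\FSTF$ goes to $\FSTF$). Hence exactly $n+1$ output digits are produced and they represent $y = 2x$.

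I do not expect a serious obstacle here; the proof is essentially bookkeeping. The one place to be careful is the bookkeeping itself: making sure the reverse-digit-order conventions are used consistently (so that ``carry into position $i+1$'' really corresponds to the factor $3^i$, and reading proceeds from least significant digit), and confirming that the transition labels $r:w$ in Figure~\ref{fig:fst} match the six cases above exactly. A minor point worth stating explicitly is that a single leading $0$ always suffices (rather than needing more padding), which is precisely the inequality $2x < 3^{n+1}$ noted above; this is why the hypothesis asks for exactly one leading $0$ and why the output has length exactly $n+1$. I would present the invariant, the six-case inductive step as a short table or enumeration, and then the two-case termination argument, keeping each arithmetic verification to a single line.
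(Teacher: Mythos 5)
Your overall strategy is sound and, while it rests on the same key insight as the paper's proof (state $\FSTG$ means ``add 1''), it is organised differently: the paper proves by induction on the word length that the FST started in $\FSTF$ computes $x\mapsto 2x$ and started in $\FSTG$ computes $x\mapsto 2x+1$, peeling off the least significant digit and recursing on the suffix, whereas you run a forward loop invariant on prefixes in which the current state records the pending carry ($\FSTF$ for carry $0$, $\FSTG$ for carry $1$, i.e.\ output so far equals $2\cdot(\text{prefix})$ or $2\cdot(\text{prefix})-3^i$). The two are equivalent in substance; your version makes the role of the single trailing/leading $0$ (flushing the carry, and the bound $2x<3^{n+1}$ guaranteeing one padding digit suffices) somewhat more explicit than the paper does.

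However, one of your six transition cases is wrong as stated, and the invariant would fail there if taken literally. You write that in state $\FSTF$ reading $d\in\{1,2\}$ the FST moves to $\FSTG$. For $d=1$ this is incorrect: as your own parenthetical arithmetic shows, $2\cdot 1=2<3$, so the output digit is $2$ and there is \emph{no} carry; the FST must remain in $\FSTF$. This matches Figure~\ref{fig:fst}, where $1:2$ is a self-loop on $\FSTF$ (and is confirmed by the worked example in its caption: input $10120$ yields $20211$, which requires staying in $\FSTF$ after the first symbol). Only $d=2$ in state $\FSTF$ produces a carry ($2\cdot 2=4=1+3$) and moves to $\FSTG$. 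With that single case corrected, your six cases line up exactly with the six arrows of the transducer, the invariant is preserved in each, and the termination argument goes through as you describe.
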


\begin{proof}
    We give an induction on ternary word length $n$, with the following induction hypothesis:
    given a word $w = w_1 \ldots w_n 0$ that represents $x\in\mathbb{N}$ (as in the lemma statement),
    if the FST reads from state $\FSTF$ then the operation $x \mapsto 2x$ is computed in reverse-ternary,
    and if the FST reads from state $\FSTG$ then  the operation $x \mapsto 2x+1$ is computed in reverse-ternary.

    For the base case, when $n=0$ we have $w=0$ and, when started from state $\FSTF$ the FST outputs $0$ and when started from state $\FSTG$ the FST outputs $1$ which corresponds to respectively applying $x \mapsto 2x$ and $x \mapsto 2x+1$.

    Let's assume that the induction hypothesis holds for $n$ and consider the base~3 word $w = w_1 \dots w_{n+1} 0 \in \{0,1,2 \}^{n+1}0$ that represents  some $x\in\mathbb{N}$.
    We first handle state $\FSTF$.
    There are three cases for the value of the least significant digit $w_1 \in \{0,1,2\}$.
    If $w_1=0$, from the FST state $\FSTF$ the first transition will output $0$ and return to state $\FSTF$.
    Then, from state $\FSTF$, by applying the induction hypothesis to the length-$m$ word $w_2 \dots w_{m+1}$, which represents the number $(x - w_1)/3=x/3$ in base $3$,
    we get that the FST output (on $w_2 \dots w_{n+1}$) is  the representation of $2\frac{x}{3}$.

    Then, by including the first output $0$,
    the complete output of the FST represents the number $0 + 3\cdot2\frac{x}{3} = 2x$ which is what we wanted.
    Similarly, if $w_1=1$ the FST outputs the representation of the number $2 + 3\cdot 2\frac{x-1}{3} = 2x$, or if $w_1=2$ the FST outputs the representation of $1 + 3 \cdot (2\frac{x-2}{3}+1) = 2x$ since it moves to state $\FSTG$ after the first transition. Likewise, if we start the FST in state $\FSTG$ the FST outputs:
    $1 + 3\cdot 2\frac{x}{3} = 2x+1$ if $w_0=0$, or
    $0 + 3\cdot(2\frac{x-1}{3}+1) = 2x+1$ if $w_0=1$, or
    $2 + 3\cdot(2\frac{x-2}{3} + 1) = 2x+1$ if $w_0 = 2$.
    In all the cases we get the result.
\end{proof}
\begin{figure}[!htb]
    \ifllncs
        \vspace{-3ex}
    \fi
    \scriptsize\centering
    \setlength\tabcolsep{2pt}
    \begin{tabular}{l|l|l|l|l|l|}
                         & \textbf{\mulTwoF}   & \textbf{\mulTwoG}\hspace{.3ex} (init) & \textbf{\checkTwo}   & \textbf{\rewind}   & \textbf{\checkHalt} \\ \hline
        0                & 0 \, R \, \mulTwoF  & 1 \, R \, \mulTwoF                    & 0 \, L \, \checkTwo  & 0 \, L \, \rewind  & 1 \, R \, \rewind   \\
        1                & 2 \, R \, \mulTwoF  & 0 \, R \, \mulTwoG                    & 1 \, L \, \checkTwo  & 1 \, L \, \rewind  & 2 \, R \, \rewind   \\
        2                & 1 \, R \, \mulTwoG  & 2 \, R \, \mulTwoG                    & 2 \, L \, \rewind    & 2 \, L \, \rewind  & Halt                \\
        \#$\!\!$ (blank) & \#\, L \, \checkTwo & 1 \, R \, \mulTwoF                    & \#\, L \, \checkHalt & \#\, R \, \mulTwoF & 0 \, R \, \rewind   \\ \hline
    \end{tabular}
    \caption{5-state 4-symbol Turing machine  $\Mf$ that halts if and only if Erd\H{o}s' conjecture is false. The initial state of the machine is \textbf{\mulTwoG},  denoted `(init)'.
        The blank symbol is $\#$ and, since this is a busy-beaver candidate, the initial tape is empty:  $\ldots \# \# \underline{\#} \# \# \ldots$  (tape head underlined).
        Example~\ref{ex:four} shows the initial 333 steps of \Mf.
        States {\bf\mulTwoF} and {\bf \mulTwoG} implement states $\FSTF$ and $\FSTG$ of the ``$\mulTwo$'' FST in Figure~\ref{fig:fst}, that multiplies a reverse-ternary number by 2.
        The other states check whether the result is a counterexample to, or one of the three special cases of, Erd\H{o}s' conjecture.}\label{fig:four}
\end{figure}

Intuitively, the 5-state 4-symbol Turing machine $M_{5,4}$ in Figure~\ref{fig:four} works as follows.
Starting from all-\# tape and state $\textbf{\mulTwoG}$,
$M_{5,4}$  constructs successive natural number powers of $2$ in base~3 (in fact in reverse-ternary) by iterating the \mulTwo FST, which is embedded in its  states $\textbf{\mulTwoF}$ and $\textbf{\mulTwoG}$.
Then, for each power of two, $M_{5,4}$ checks that there is at least one digit $2$ using state $\textbf{\checkTwo}$.
If at least one digit $2$ is found, then, using state $\textbf{\rewind}$ the machine goes back to the start (left) and iterates on to the next power of $2$.
If no digit $2$ is found, such as in the three known special cases $1_{10}=1_3$, $4_{10} = 11_3$ and $256_{10}=100111_3$ (giving the condition $n>8$ in Erd\H{o}s' conjecture), then a counter is incremented on the tape (using state \textbf{\checkHalt}) and if this counter goes beyond value three the machine halts. The machine halts, iff we have found a counterexample to Erd\H{o}s' conjecture, a fact we prove formally below in Theorem~\ref{thm:four}.
\begin{example}
    \label{ex:four}
    Here, we highlight 11 out of the first 334 configurations of \Mf. Five of the first 16 configurations are shown on the left (read from top to bottom),
    and 6 out of configurations 17 to 334 are shown on the right.\footnoteref{ft:onlineSimulator}
    From step 5 onwards, the content of the tape is of the form $c\#w_1\dots w_n\#$ with $c,w_i \in\{0,1,2\}$, where $c$ represents a single-symbol counter keeping track of the $3$  special cases of Erd\H{o}s conjecture and $w_1\dots w_n$ is the reverse-ternary representation of a power of two. For instance, in the final configuration below we have 1101011202221 as the reverse-ternary representation of $2^{20} = 122021101011_3$:
\end{example}
\ifllncs
    \vspace{-3ex}
\fi
\scriptsize
\begin{align*}
                  & \mulTwoG,   &       & \dots \#\#\#\bolu{\#}\#\#\dots\quad &  &  &  & \vdash^1     & \rewind,\;   & \quad\dots\#1\bolu{\#}11\#\dots               \\
    \vdash^5\quad & \rewind,    &       & \dots \#\# 0\bolu{\#}1\#\dots\quad  &  &  &  & \vdash^6     & \rewind,\;   & \quad\dots \#1\bolu{\#}22\#\dots              \\
    \vdash^4\quad & \rewind,    &       & \dots \# \#0\bolu{\#}2\#\dots\quad  &  &  &  & \vdash^{40}  & \rewind,\;   & \quad\dots \#1\bolu{\#}20211\#\dots           \\
    \vdash^4\quad & \checkTwo,  &       & \dots \#0\#1\bolu{1}\#\dots\quad    &  &  &  & \vdash^8     & \checkTwo,\; & \quad\dots\# 1\#11100\bolu{1}\#\dots          \\
    \vdash^3\quad & \checkHalt, &       & \dots \# \bolu{0}\#11\#\dots\quad   &  &  &  & \vdash^8     & \rewind,\;   & \quad\dots \#2\bolu{\#}111001\#\dots          \\
                  &             & \quad &                                     &  &  &  & \vdash^{254} & \rewind,\;   & \quad\dots \#2\bolu{\#}1101011202221\#\dots &
\end{align*}
\normalsize

\thmfoursymbols*

\begin{proof}
    \ifllncs
        See Appendix~\ref{app:proofThm2}.
    \else
        The machine is called \Mf and is given in Figure~\ref{fig:four}.
We index tape positions by integers in~$\Z$. Initially (at step 0), the tape head is at position $0$ and each position of the tape contains the blank symbol \#. The construction organises the tape as follows: position $-1$ holds a counter to keep track of the 3 known special cases of the Erd\H{o}s' conjecture (1, 4 and 256 in base 10 which are 1, 11 and 100111 in base $3$), position $0$ always contains a blank $\#$ to act as a separator, and positive positions will hold reverse-ternary representations of powers of two, i.e. least significant digit at position $1$ and most significant at position $\lceil \text{log}_3(2^n) \rceil$.  States $\textbf{\mulTwoF}$ and $\textbf{\mulTwoG}$ reproduce the logic of the ``\mulTwo''  FST (Figure~\ref{fig:fst}) where, in state \mulTwoG the blank symbol behave like ternary digit $0$ while in \mulTwoF it triggers the end of the multiplication by $2$.

Starting the Turing machine in state $\textbf{\mulTwoG}$  appropriately initialises the process, the reader can verify that at step $5$ the machine is in state $\textbf{rewind}$, the tape head is at position $0$ and the tape content between positions $-1$ to $2$ included is: $0\underline{\#}1\#$ (tape head position underlined) and all other positions are blank.\footnote{State $\textbf{\mulTwoG}$ was not designed to kick-start the process, but starting with it happens to give us what we need.} From there we prove the following result (IH) by induction on $n\in\N$: at step $s_n=~5~+~c_n~+\sum_{k=1}^{n} 2*(\lceil \text{log}_3(2^k) \rceil+1)$ either (a) \Mf has halted, the tape head is at position $-1$ and the reverse-ternary represented number written on the positive part of the tape (with digits in reverse order) is a counterexample to Erd\H{o}s' conjecture, i.e. it has no digit equal to 2 and is of the form $2^{n_0}$ with $n_0 > 8$, or (b) \Mf is in state \textbf{rewind}, the tape head is at position $0$ which contains a blank symbol and the reverse-ternary represented number by the digits between position $1$ and $\lceil \text{log}_3(2^n) \rceil$ is equal to $2^n$ and all positions coming after $\lceil \text{log}_3(2^n) \rceil$ are blank. The number $c_n$ in the expression of $s_n$ accounts for extra steps that are taken by \Mf to increment the counter at position $-1$ which keeps track of the three known special cases of the conjecture: $1=1_3$, $4 = 11_3$ and $256=100111_3$. In practice, $c_n$ is defined by $c_n = 0$ for $n \leq 1$, $c_n = 2$ for $2 \leq n \leq 7$ and $c_n = 4$ for $n \geq 8$.

For $n = 0$ we have $s_0  = 5$ and case (b) of the induction hypothesis (IH) is verified as $1=2^0$ is represented on the positive part of the tape as seen above.

Let's assume that the result holds for $n\in\N$. If \Mf had already halted at step $s_n$, then case (a) of IH still holds and nothing is left to prove. Othewise, by case (b) of IH, at step $s_n$ then \Mf is in state $\textbf{rewind}$ at position $0$ which contains a \# and between positions $1$ and $\lceil \text{log}_3(2^n) \rceil + 1$ the following word is written: $w = w_1 \dots w_{\lceil \text{log}_3(2^n) \rceil} \#$ such that $w_1\dots w_{\lceil \text{log}_3(2^n) \rceil}$ is the reverse-ternary representation of $2^n$. At step $s_n+1$, tape position $0$ still contains the blank symbol \#, \Mf is in state $\textbf{\mulTwoF}$ and the tape head is at position $1$ where it begins simulating the \mulTwo\ FST on $w$. By having the final $\#$ of $w$ encode a $0$ then Lemma~\ref{lem:fst} applies which means that, after scanning $w$, the positive part of the tape contains the reverse-ternary expression of $2*2^n = 2^{n+1}$. Either \Mf was in state $\textbf{\mulTwoF}$ when it read the final $\#$ of $w$, in which case it will stop simulating the FST and jump to state $\textbf{\checkTwo}$ and move the tape head to the left. Otherwise it was in state $\textbf{\mulTwoG}$ and read the $\#$ in which case it jumps to state $\textbf{\mulTwoF}$ and moves the tape head to the right where, by IH, a $\#$ will be read bringing us back to the previous case. In both cases, $\lceil \text{log}_3(2^{n+1}) \rceil + 1$ symbols have been read since $s_n$, \Mf is in state $\textbf{\checkTwo}$ and all positions after and including $\lceil \text{log}_3(2^{n+1}) \rceil + 1$ are still blank.

State $\textbf{\checkTwo}$ will scan to the left over the reverse-ternary expression that was just computed to search for a digit equal to $2$. If a $2$ is found, then it switches to state $\textbf{rewind}$ and will reach position $0$ at step $s_n + 2*(\lceil \text{log}_3(2^{n+1}) \rceil + 1)$. In that case, we also have $c_{n+1} = c_{n}$ as by definition of $c_n$, one can verify that $c_{n+1} \neq c_{n}$ implies that $2^{n+1}$ has no ternary digit equal to 2. Hence, if a $2$ was found the machine reaches tape position $0$ at step $s_{n+1}=~5~+~c_{n+1}~+\sum_{k=1}^{n+1} 2*(\lceil \text{log}_3(2^k) \rceil+1)$, position $0$ contains the blank symbol, positions $1$ to $\lceil \text{log}_3(2^{n+1}) \rceil$ hold the reverse-ternary representation of $2^{n+1}$ and all positions after $\lceil \text{log}_3(2^{n+1}) \rceil$ are still blank, which is what we wanted under case (b) of IH. If no $2$s were found then \Mf will reach position $-1$ in state $\textbf{\checkHalt}$ and if $n+1 = 2$ or $n+1 = 8$ it will respectively read a $0$ or $1$ which will respectively be incremented to $1$ or $2$, then \Mf goes back to position $0$ in state $\textbf{rewind}$ and case (b) of IH is verified as the value $c_n$ accounts for the extra steps that were taken to increment the counter in those cases. However, if $n+1 > 8$ then \Mf will read a $2$ at position $-1$ and consequently halt with positions $1$ to $\lceil \text{log}_3(2^{n+1}) \rceil$ giving the base three representation of $2^{n+1}$ containing no digit equal to two: a counterexample to Erd\H{o}s' conjecture has been found and it is consequently false and case (a) of IH holds.

From this proof, if \Mf halts then Erd\H{o}s' conjecture is false. For the reverse direction, note that \Mf will test every single successive power of two until it finds a potential counterexample meaning that if Erd\H{o}s' conjecture is false \Mf will find the smallest counterexample and stop there. Hence the 5-state, 4-symbol Turing machine given in Figure~\ref{fig:four} halts if and only if Erd\H{o}s' conjecture is false.
    \fi
\end{proof}


\section{Fifteen states, two symbols Turing machine}
\label{sec:two}

\begin{figure}[!htb]
  \includegraphics[width=\textwidth]{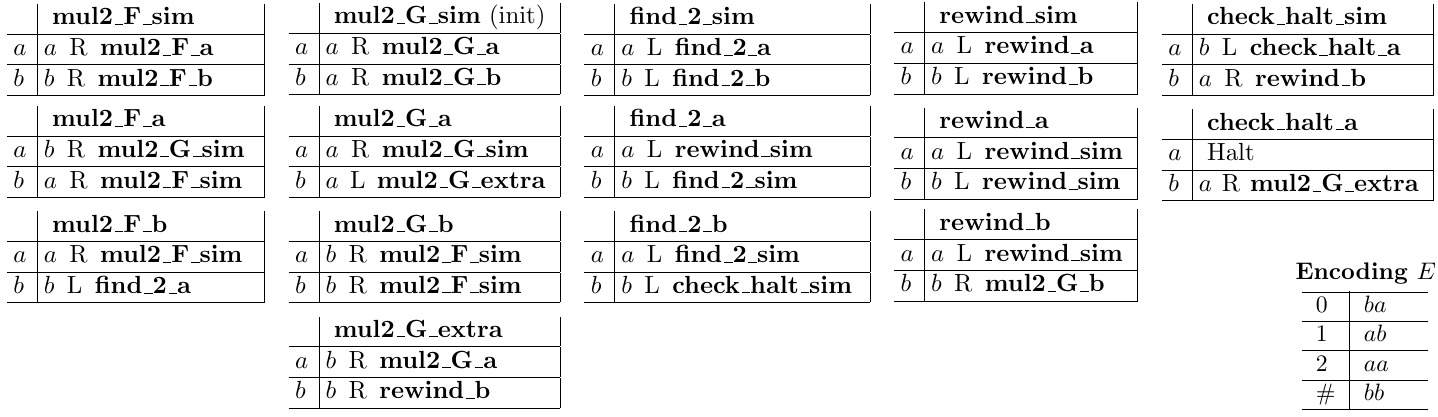}
  \caption{15-state 2-symbol Turing machine  $\Mt$ that halts if and only if Erd\H{o}s' conjecture is false. The initial state is \mulTwoGS{\bf sim},  denoted `(init)'.
    The blank symbol is $b$, and, since this is a busy-beaver candidate, the initial tape is empty:  $\ldots  b b \bolu{b} b b \ldots$  (tape head position bold and underlined).
    States are organised into 5 columns, one for each state of $\Mt$ in Figure~\ref{fig:four}, and inherit name prefixes from $\Mt$.
    In Lemma~\ref{lem:sim} we prove that~$\Mt$ simulates~$\Mf$.}\label{fig:two}
\end{figure}

In this section, we prove Theorem~\ref{thm:two}.
To do this we define the 15-state 2-symbol Turing machine \Mt in Figure~\ref{fig:two} and then prove that it halts if and only if it finds a counterexample to \Eds' conjecture.
We begin with some remarks to guide the reader.

\subsection{Intuition and overview of the construction}
Turing machines with two symbols can be challenging to reason about and prove correctness of for two reasons:
(1) the tape is  a difficult-to-read stream of bits and
(2) simple algorithmic concepts need to be  distributed across multiple states (because of the low number of symbols).
We took the approach of first designing $\Mf$ (Figure~\ref{fig:four}, relatively easy to understand), and then designing
$\Mt$ to simulate $\Mf$. (With the caveat that both machines have a few program-size optimisations.)

In order to avoid confusion,  $\Mt$ uses alphabet $\{a,b\}$ (with  $b$ blank) which is distinct to that of \Mf.  In Lemma~\ref{lem:sim} we prove that \Mt \textit{simulates}  \Mf, via a tight notion of simulation given in Definition~\ref{def:sim} and the Lemma statement; with only a linear time overhead. Symbols of \Mf are encoded by those of \Mt
using the encoding function $E: \{\#,0,1,2\} \to \{a,b\}^2$ defined by $E(\#) = bb$, $E(0) = ba$, $E(1) = ab$, $E(2) = aa$.
Intuitively, the 15 states of $M_{15,2}$ are partitioned into $5$ sets:
the idea is that each of the five states of $\Mf$ is simulated by one of the five corresponding {\em column of states}  in Figure~\ref{fig:two}.

The naming convention for states in $M_{15,2}$ is as follows: for each state of \Mf,  there is a state with the same name in \Mt, followed by {\bf\_sim} (short for `simulate'), that is responsible for initiating the behaviour of the corresponding state in \Mf.
Then, from each  {\bf\_sim} state in $M_{15,2}$ control moves to one of two new states, suffixed by {\bf \_a} and {\bf \_b}, after reading symbol $a$ or symbol $b$.
At the next step, two consecutive letters have been read and $M_{15,2}$ knows which of the $4$ possible cases of the encoding function $E$ it is considering and has sufficient information to simulate one step of $M_{5,4}$.
However, in many cases, for program-size efficiency reasons, \Mt makes a decision before it has read both symbols of the encoding.

There are two exceptions to the \Mt state naming rule: there is no state \checkHaltS {\bf b} as our choice of encoding function $E$ meant not needing to consider that case when simulating \checkHalt, and the state \mulTwoGS{\bf extra} which is involved in simulation of both \mulTwoG and \checkHalt.

\begin{example}
  \label{ex:two}\normalfont
  Here, we highlight 11 of the first 742 configurations of \Mt.
  These configurations  represent the simulation (via encoding $E$) of those highlighted in Example~\ref{ex:four}. For instance, in the second configuration shown we have $bb\;ba\;bb\;ab\;bb\; = E(\#)E(0)E(\#)E(1)E(\#)$ which corresponds to the tape content of the second configuration shown in Example~\ref{ex:four}. Simulation comes with a (merely linear) cost in time, as here, the first 333 steps of \Mf are simulated in 741 steps in \Mt.
  \scriptsize
  \begin{align*}
     &                &  & \mulTwoGS{\bf sim},                 &  & \dots bb\;bb\; \bolu{b}b\;bb\; bb\dots                                                             \\
     & \vdash^{10}    &  & \rewindS{\bf b},                    &  & \dots bb\; ba\; \bolu{b}b\; ab\;bb\dots                                                            \\
     & \vdash^{9\;\;} &  & \rewindS{\bf sim},                  &  & \dots bb\; ab\; b\bolu{b}\; aa\; bb\dots                                                           \\
     & \vdash^{10}    &  & \checkTwoS{\bf sim},                &  & \dots bb\;ba\;bb\;ab\;a\bolu{b}\;bb\dots                                                           \\
     & \vdash^{6\;\;} &  & \checkHaltS{\bf sim}, \hspace{-2ex} &  & \dots bb\; b\bolu{a}\;bb\;ab\;ab\;bb\dots                                                          \\
     & \vdash^3       &  & \rewindS{\bf b},\;                  &  & \dots bb\;ab\;\bolu{b}b\;ab\;ab\;bb\dots                                                           \\
     & \vdash^{13}    &  & \rewindS{\bf sim},\;                &  & \dots bb\;ab\;b\bolu{b}\;aa\;aa\;bb\dots                                                           \\
     & \vdash^{92}    &  & \rewindS{\bf sim},\;                &  & \dots bb\;ab\;b\bolu{b}\;aa\;ba\;aa\;ab\;ab\;bb\dots                                               \\
     & \vdash^{22}    &  & \checkTwoS{\bf sim},\;              &  & \dots bb\; ab\; bb\; ab\;ab\;ab\;ba\;ba\;a\bolu{b}\;bb\dots                                        \\
     & \vdash^{15}    &  & \rewindS{\bf b},\;                  &  & \dots bb\; aa \; \bolu{b}b\; ab\; ab\; ab \; ba \; ba \; ab \; bb\dots                             \\
     & \vdash^{561}   &  & \rewindS{\bf sim},\;                &  & \dots bb\;aa\;b\bolu{b}\; ab\; ab\; ba\; ab\; ba\; ab\; ab\; aa\; ba\; aa\; aa\; aa\; ab\; bb\dots
  \end{align*}

  \normalsize
\end{example}

\ifllncs
  In Appendix~\ref{app:lemmaSimulation}, we formally show that \Mt simulates \Mf (Lemma~\ref{lem:sim}).

\else
  \subsection{Proof of correctness}
  
We define what we mean by `simulates'; the definition couples the dynamics of two machines so that the tape content of one machine is mapped in a straightforward way to that of the other.

\begin{definition}[simulates]\label{def:sim}
  \normalfont
  Let $M$ and $M'$ be single-tape Turing machines with alphabets $\Sigma$ and~$\Sigma'$.
  Then,  $M'$ \textit{simulates}  $M$ if there exists $m\in\N$, a function $E: \Sigma \to \Sigma'^m$ and a computable \textit{time-scaling} function $f: \N \to \N$, such that for all time steps $n\in\N$ of $M$,
  then for step $f(n)$ of $M'$, either
  both $M$ and $M'$ have already halted, or else they are both still running and,
  if $M$ has tape content  $\dots t_{-1}^n t_{0}^n t_1^n \dots$ (where $t_i \in \Sigma$)  then
  $M'$ has tape content $\dots E(t_{-1}^n) E(t_0^n) E(t_1^n) \ldots\;$.
\end{definition}

\begin{definition}[$\Mt$'s encoding function $E$]\label{def:E}
  \normalfont
  Let  $E: \{\#,0,1,2\} \to \{a,b\}^2$ be $\Mt$'s  encoding function, where $E(\#) = bb$, $E(0) = ba$, $E(1) = ab$ and $E(2) = aa$.
\end{definition}

Our main technical lemma in this section states that $\Mt$ (Figure~\ref{fig:two})  simulates $\Mt$ (Figure~\ref{fig:four}), with a linear time overhead:

\begin{lemma}\label{lem:sim}
  \normalfont
  Machine $M_{15,2}$ simulates $M_{5,4}$, according to Definition~\ref{def:sim},
  with encoding function~$E$ (Definition~\ref{def:E}) and time-scaling function $f$ recursively defined by: $f(0) = 0$ and $f(n+1) = f(n) + g((q_n,d_n),\sigma_n)$ with:
  $q_n$ being the state of $M_{5,4}$  at step $n$, $d_n \in \{\text{L},\text{R}\}$ the tape head direction at step $n-1$ (where $d_0 = \text{R}$), $\sigma_n \in  \{\#,0,1,2\}$ the read symbol at step $n$, and $g$ the partial function defined in Figure~\ref{fig:timescale}.
\end{lemma}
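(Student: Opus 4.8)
The plan is to prove Lemma~\ref{lem:sim} by induction on the time steps $n$ of $M_{5,4}$, establishing a slightly stronger invariant than the statement of Definition~\ref{def:sim} literally requires --- namely that at step $f(n)$ the machine $M_{15,2}$ is not in an arbitrary state compatible with the encoded tape, but specifically in the \textbf{\_sim} state corresponding to $M_{5,4}$'s current state $q_n$, with its tape head at the position encoding (under $E$, which doubles all tape coordinates) the position of $M_{5,4}$'s head, and with the whole tape equal to the $E$-image of $M_{5,4}$'s tape. This ``synchronisation invariant'' is what makes the induction go through: knowing we are in a \textbf{\_sim} state is exactly what lets us predict the next block of $M_{15,2}$ steps. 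The base case $n=0$ is immediate: both machines start on blank tape ($E(\#)=bb$ so the all-$b$ tape is the $E$-image of the all-$\#$ tape), in their respective initial states \mulTwoGS{\bf sim} and \textbf{\mulTwoG}, with $f(0)=0$.

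For the inductive step, assume the synchronisation invariant at step $n$. If $M_{5,4}$ has halted, then by the invariant $M_{15,2}$ has halted too and there is nothing to prove; this is where one checks that every transition of $M_{5,4}$ that is \texttt{Halt} (there is exactly one, in state \textbf{\checkHalt} on symbol $2$) is matched by a \texttt{Halt} transition reachable in $M_{15,2}$ from the corresponding \textbf{\_sim} state on the corresponding encoded input. Otherwise $M_{5,4}$ executes one transition $(q_n,\sigma_n) \mapsto (\sigma', D, q_{n+1})$. One then follows the execution of $M_{15,2}$ starting from the \textbf{\_sim} state for $q_n$ at the head position encoding $M_{5,4}$'s head: $M_{15,2}$ reads the two symbols $E(\sigma_n)$ (sometimes deciding after only the first symbol, as the paper notes), rewrites them to $E(\sigma')$, and walks its head two cells left or two cells right according to $D$, arriving in the \textbf{\_sim} state for $q_{n+1}$. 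The number of $M_{15,2}$ steps consumed is read off from this trace; one verifies it equals $g((q_n,d_n),\sigma_n)$ as tabulated in Figure~\ref{fig:timescale}, and hence $f(n+1) = f(n) + g((q_n,d_n),\sigma_n)$ is the step at which the invariant is re-established. The auxiliary argument $d_n$ (the previous move direction) is needed precisely because in the program-size-optimised design some \textbf{\_sim} states are entered with the head already part-way into an encoding block, so the step count --- and occasionally the micro-routine taken --- depends on which side the head came from; one checks the finitely many such cases are exactly those where $g$ depends on its $d_n$ argument, and that $g$ is left undefined (a partial function) only on pairs $(q_n,d_n),\sigma_n)$ that the invariant guarantees never arise.

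Concretely this is a finite case analysis: five states of $M_{5,4}$, four symbols, and two incoming directions, so at most forty cases, each of which is a short deterministic trace through at most a handful of $M_{15,2}$ transitions --- the sort of verification the accompanying \texttt{bbsim} simulator is meant to corroborate. The bookkeeping is eased by grouping the $M_{15,2}$ states into the five ``columns'' of Figure~\ref{fig:two}: within each column one argues once that the \textbf{\_sim}, \textbf{\_a}, \textbf{\_b} (and the two exceptional states \mulTwoGS{\bf extra}, and the absent \checkHaltS{\bf b}) states collectively implement the column's $M_{5,4}$ state on all four encoded symbols and then hand control to the correct next column's \textbf{\_sim} state with the head correctly repositioned. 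Finally, once the synchronisation invariant is proved for all $n$, the conclusion of Lemma~\ref{lem:sim} --- that $M_{15,2}$ simulates $M_{5,4}$ with encoding $E$ and time-scaling $f$ --- is a direct specialisation, since being in a \textbf{\_sim} state with the $E$-encoded tape is in particular a configuration with the $E$-encoded tape, and the halting clauses match by the check above.

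The main obstacle I expect is not conceptual but the sheer care required in the case analysis: getting the off-by-one step counts right in the cases where $M_{15,2}$ commits after reading only the first of the two encoding symbols, and correctly tracking the head position modulo $2$ so that each \textbf{\_sim} state is genuinely re-entered aligned to the start of an encoding block (or at the documented exceptional offset). The direction parameter $d_n$ and the two irregular states are exactly the places where a naive induction would break, so the heart of the write-up is to isolate those, tabulate $g$ honestly including its domain of definition, and verify the invariant is preserved in each. Everything else --- the base case, the halting-clause matching, the passage from the strengthened invariant back to Definition~\ref{def:sim} --- is routine.
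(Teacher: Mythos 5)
Your proposal follows essentially the same route as the paper: strengthen Definition~\ref{def:sim} to a synchronisation invariant that pins down $M_{15,2}$'s exact state and head position at step $f(n)$ as a function of $M_{5,4}$'s state, previous head direction and head position, prove it by induction with a finite case analysis over $(q_n,d_n,\sigma_n)$, and check the halting clause at $(\checkHalt,\text{L})$ reading symbol $2$. The one adjustment you would make in the write-up is that the invariant's target state must be a function of the pair $(q_n,d_n)$ rather than of $q_n$ alone --- the paper's map $h$ sends $(\rewind,\text{R})$ to $\rewindS{\bf b}$, which is not a \textbf{\_sim} state, and the head sits at offset $2i_n+1$ after a left move --- but you have already flagged exactly these direction-dependent irregularities as the places where the naive invariant needs repair.
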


\begin{proof}

  Let $S=\{\mulTwoF, \mulTwoG, \checkTwo, \rewind, \checkHalt\}$ be the set of 5 states of $M_{5,4}$ and $S'$ be the set of 15 states of $\Mt$ given in Figure~\ref{fig:two}. We prove the result by induction on $n\in\N$, the number of steps taken by the {\em simulated} machine $\Mf$. Let $k_n = (q_n,d_n)\in S \times \{\text{L},\text{R}\}$ be $\Mf$'s state at step~$n$ together with the tape-head direction at the {\em previous} step $n-1$ (we take the convention $d_0 = \text{R}$), let $\sigma_n \in \{0,1,2,\#\}$ be $\Mf$'s read symbol at step $n$, and let $i_n\in\Z$ be $\Mf$'s tape head position at step $n$.
  We note that by inspecting Figure~\ref{fig:four} the set of possible values for $k_n$ is
  \begin{equation*}
    K=\{(\mulTwoF,\text{R}),(\mulTwoG,\text{R}),(\checkTwo,\text{L}),(\checkHalt,\text{L}),(\rewind,\text{L}),(\rewind,\text{R})\}
  \end{equation*}
  observing that \rewind is the only $M_{5,4}$ state reachable  from both left and right tape head moves.

  The induction hypothesis (IH) is the definition of simulation (Definition~\ref{def:sim}) instantiated with~$E$ (Definition~\ref{def:E}),  $f$ as defined in the lemma statement, and the following: at step $f(n)$ machine $M_{15,2}$'s head is at position $2i_n + \pi(d_n)$ with $\pi(L) = 1$ and $\pi(R) = 0$ and the machine is in state $h(k_n)$ with $h: K \to S'$ defined by $h(\mulTwoF,\text{R})=\mulTwoFS{\bf sim}$, $h(\mulTwoG,\text{R})=\mulTwoGS{\bf sim}$, $h(\checkTwo,\text{L})=\checkTwoS{\bf sim}$, $h(\checkHalt,\text{L}) = \checkHaltS{\bf sim}$, $h(\rewind,\text{L})=\rewindS{\bf sim}$ and $h(\rewind,\text{R})=\rewindS{\bf b}$.

  If $n=0$, we have $f(0) = 0$ and both tapes are entirely blank: $\dots \# \# \# \dots$ for \Mf, and $\dots bbb \dots$ for \Mt which is consistent with $E(\#) = bb$. Additionally, $k_0 = (\mulTwoG,R)$, the tape head of \Mt is at position $0 = 2i_0 + 0$ with $i_0 = 0$ being the tape head position of \Mf and finally, \Mt is in state $h(k_0) = \mulTwoGS{\bf sim}$ (which is also its initial state).

  Let's assume that the induction hypothesis (IH) holds for $n \in \N$. If at step $n$ for \Mf, and step $f(n)$ for \Mt, both machines have already halted they will still have halted at any future step and the IH still holds.
  Let's instead assume from now on, IH holds and we are in the second case of Definition~\ref{def:sim} (both are still running and tape contents are preserved under the encoding function $E$).

  If $k_n = (\mulTwoF,\text{R})$ and $\sigma_n = 2$ then the configuration of \Mf at step $n$ is $ \mulTwoF,\; \allowbreak \dots \bolu{2} \dots$ with tape head at position $i_n$. By Figure~\ref{fig:four}, at step $n+1$ the configuration becomes $\mulTwoG,\; \dots 1\bolu{*} \dots$ and $i_{n+1} = i_n + 1$  (with $*$ being whatever symbol is at position $i_n + 1$) and $k_{n+1} = (\mulTwoG,\text{R})$. By IH, at step $f(n)$ the tape head position of \Mt is $2i_n + \pi(\text{R}) = 2i_n$ and the configuration of \Mt is $ \mulTwoFS{\bf sim},\; \dots \bolu{a}a \dots$, since $E(2) = aa$ and $h(k_n) = \mulTwoFS{\bf sim}$. By Figure~\ref{fig:two}, at step $f(n+1) = f(n) + g((\mulTwoF,\text{R}),1) = f(n) + 2$, the configuration of \Mt is $ \mulTwoGS{\bf sim},\; \dots ab\bolu{*} \dots$ with tape head  at position $2i_n +2$. We have $E(1) = ab$ (and no other tape positions than $2i$ and $2i+1$ were modified), $2i_n + 2 = 2(i_n + 1) = 2i_{n+1} + \pi(\text{R}) = 2i_{n+1} + \pi(d_{n+1})$ and $\mulTwoGS{\bf sim} = h(k_{n+1})$ which is everything we needed to satisfy IH at step $n+1$.
  We leave verifications of cases $\sigma_n = 0$ and $\sigma_n = 1$ to the reader as they are very similar. If $\sigma_n = \#$ then \Mf writes $\#$ then goes left to state $\checkTwo$, which gives $k_{n+1} = (\checkTwo,\text{L})$ and $i_{n+1} = i_n - 1$. By IH, at step $f(n)$ the tape head position of \Mt is $2i_n + \pi(\text{R}) = 2i_n$ and the configuration of \Mt is $ \mulTwoFS{\bf sim},\; \dots \bolu{b}b \dots$. After $g((\mulTwoF,\text{R}),\#)=3$ steps, the configuration becomes $ \checkTwoS{\bf sim},\; \dots \bolu{*}bb \dots$. This is consistent with having $E(\#) = bb$, moving the tape head to position $2i_n - 1 = 2(i_n-1)+1 = 2i_{n+1} + \pi(d_{n+1})$ and having $\checkTwo{\bf sim} = h(k_{n+1})$ which is all we needed.

  \begin{figure}[t] 
    \centering\scriptsize
    \setlength\tabcolsep{4.5pt}
    \begin{tabular}{c|c|c|c|c|c|c|}
         & (\mulTwoF,R) & (\mulTwoG,R)\hspace{.3ex} & (\checkTwo,L) & (\checkHalt,L) & (\rewind,L) & (\rewind,R) \\ \hline
      0  & 2            & 2                         & 2             & 3              & 2           & -           \\ \hline
      1  & 2            & 4                         & 2             & 1              & 2           & -           \\ \hline
      2  & 2            & 2                         & 2             & 2              & 2           & -           \\ \hline
      \# & 3            & 2                         & 2             & 1              & 3           & 2           \\ \hline
    \end{tabular}
    \caption{The partial function $g$ of Lemma~\ref{lem:sim} that defines how many steps are needed by $\Mt$ to simulate one step of \Mf, for each (state, move-direction-on-previous-step), and symbol, of an \Mf step.
      The function is partial, defined on only one entry of the final column, as the proof of Theorem~\ref{thm:four} shows that only symbol $\#$ can be read if \Mf reaches state \rewind~coming from the left (hence by moving tape head in direction R).
    }\label{fig:timescale}
  \end{figure}

  If $k_n = (\mulTwoG,\text{R})$ and $\sigma_n = 1$ then the configuration of \Mf at step $n$ is
  $ \mulTwoG,\; \dots \bolu{1} \dots$ with tape head at position $i_n$. By Figure~\ref{fig:four}, at step $n+1$ the configuration becomes $ \mulTwoG,\; \dots 0\bolu{*} \dots$ and $i_{n+1} = i_n + 1$ and $k_{n+1} = (\mulTwoG,\text{R})$. By IH, at step $f(n)$ the tape head position of \Mt is $2i_n + \pi(\text{R}) = 2i_n$ and the configuration of \Mt is $ \mulTwoGS{\bf sim},\; \dots \bolu{a}b \dots$ as $E(1) = ab$ and $h(k_n) = \mulTwoGS{\bf sim}$. In that case, the machine will need $g((\mulTwoG,\text{R}),1) = 4$ steps as it will have to bo back one step to the left after having scanned the second symbol of $E(1)=ab$ in order to write the first symbol $E(0) = ba$. This is realised by the first transition of intermediate state \mulTwoGS{\bf extra} as we are assured to read symbol $a$ in that case. Altogether, we get that at step $f(n+1) = f(n) + 4$ the configuration of \Mt is $ \mulTwoGS{\bf sim},\; \dots ba\bolu{*} \dots$ with $\mulTwoGS{\bf sim}=h(k_{n+1})$ and tape head at position $2i_n +2 = 2i_{n+1} + \pi(d_{n+1})$ which is everything we need. We leave cases $\sigma_n \in \{0,2,\#\}$ to the reader.

  If $k_n = (\checkTwo,\text{L})$, the simulation is straightforward for any $\sigma_n \in \{0,1,2,\#\}$ as the content of the tape is not modified and the tape head always moves to the left, hence we leave those cases to the reader.

  If $k_n = (\rewind,\text{L})$ and $\sigma_n = \#$ then the configuration of \Mf at step $n$ is
  $ \mulTwoG,\; \dots \bolu{\#} \dots$ with tape head at position $i_n$. By Figure~\ref{fig:four}, at step $n+1$ the configuration becomes $\mulTwoF,\; \dots \#\bolu{*} \dots$ and $i_{n+1} = i_n + 1$ and $k_{n+1} = (\mulTwoF,\text{R})$. By IH, at step $f(n)$ the tape head position of \Mt is $2i_n + \pi(\text{L}) = 2i_n + 1$ and the configuration of \Mt is $ \rewindS{\bf sim},\; \dots b\bolu{b} \dots$ as $E(\#) = bb$ and $h(k_n) = \rewindS{\bf sim}$. By Figure~\ref{fig:two}, at step $f(n+1) = f(n) + g((\rewind,\text{L}),\#) = f(n) + 3$, the configuration of \Mt is $\mulTwoFS{\bf sim},\; \dots bb\bolu{*} \dots$ with $bb=E(\#)$, $\mulTwoFS{\bf sim}=h(k_{n+1})$ and  tape head  at position $2i_n +2 = 2i_{n+1}+\pi(d_{n+1})$, which is everything that we need. We leave cases $\sigma_n \in \{0,1,2\}$ to the reader as in those cases the tape head always moves to the left and the tape content is also not modified.

  If $k_n = (\rewind,\text{R})$, by the proof of Theorem~\ref{thm:two}, we know that necessarily $\sigma_n = \#$, since in that case, the tape head of \Mf is at position $i_n = 0$ which always holds a $\#$ (the machine is just after incrementing the `3 special cases of \Eds' conjecture' counter at position $-1$), hence the configuration of \Mf at step $n$ is $ \rewind,\; \dots \bolu{\#} \dots$ and at step $n+1$ it becomes $ \mulTwoF,\; \dots \#\bolu{*} \dots$ (Figure~\ref{fig:four}) and $i_{n+1} = i_n + 1$ and $k_{n+1} = (\mulTwoF,\text{R})$. By IH, at step $f(n)$ the tape head position of \Mt is $2i_n + \pi(\text{R}) = 2i_n$ and the configuration of \Mt is $ \rewindS{\bf b},\; \dots \bolu{b}b \dots$ as $E(\#) = bb$ and $h(k_n) = \rewindS{\bf b}$. By Figure~\ref{fig:two}, at step $f(n+1) = f(n) + g((\rewind,\text{R}),\#) = f(n) + 2$, the configuration of \Mt is $ \mulTwoFS{\bf sim},\; \dots bb\bolu{*} \dots$ with $bb=E(\#)$, $\mulTwoFS{\bf sim}=h(k_{n+1})$ and the tape head is at position $2i_n +2 = 2i_{n+1}+\pi(d_{n+1})$, which is everything that we need.

  If $k_n = (\checkHalt,\text{L})$ then note that for $\sigma_n \in \{0,1,\#\}$ machine \Mf will `increment' the value of $\sigma_n$ then transition to $k_{n+1} = (\rewind, \text{R})$ moving its head to $i_{n+1} = i_n + 1$. Cases $\sigma_n \in \{\#,1\}$ are arguably where we make the most use of the encoding function $E$ in order to use very few states in \Mt to simulate the behavior of \Mf. Indeed, if $\sigma_n = \#$ or $\sigma_n = 1$ then \Mf must respectively write $1$ and $2$. This means that encodings must go from $E(\#) = bb$ to $E(1) = ba$ and from $E(1) = ab$ to $E(2) = aa$. By IH, machine \Mt is currently reading the second symbol of the encoding (because the head is at position $2i_n + \pi(\text{L}) = 2i_n + 1$), which is symbol $b$, then it is enough for \Mt, using only one step, just to turn that $b$ into an $a$ which deals with both cases and then go right to state $h(k_{n+1}) = \rewindS{\bf b}$ while moving the head to $2i_n + 2 = 2i_{n+1} + \pi(d_{n+1})$ which is what we need. This gives $g((\checkHalt,\text{L}),\#) = g((\checkHalt,\text{L}),1) = 1$. We let the reader verify the case $\sigma_n = 0$ which gives $g((\checkHalt,\text{L}),0) = 3$.

  If $k_n = (\checkHalt,\text{L})$ and $\sigma_n = 2$ then, at step $n+1$ machine \Mf will halt. By IH, at step $f(n)$ the tape head position of \Mt is $2i_n + \pi(\text{L}) = 2i_n + 1$ and the configuration of \Mt is $ \checkHaltS{\bf sim},\; \dots a\bolu{a} \dots$ as $E(2) = aa$ and $h(k_n) = \checkHaltS{\bf sim}$. By Figure~\ref{fig:two}, at step $f(n+1) = f(n) + g((\checkHalt,\text{L}),2) = f(n) + 2$ machine \Mt halts. Hence if \Mf halts then \Mt halts. Machine \Mt has only one halting instruction and this proof shows that it is reached only in the case where $k_n = (\checkHalt,\text{L})$ and $\sigma_n = 2$ meaning that if \Mt halts then \Mf halts. Hence the machines are either both running at respectively step $n+1$ and $f(n+1)$ or both have halted.

  In all cases, the induction hypothesis is propagated at step $n+1$ and we get the result: machine \Mt simulates \Mf according to Definition~\ref{def:sim}.
\end{proof}
\fi
\subsection{Main result and corollaries}\label{sec:main and cor}
\ifllncs
\else
  Using our previous results, the proofs of Theorem~\ref{thm:two}, and Corollary~\ref{cor:ErdosIsFinite} and~\ref{cor:BBlowerBound} are almost immediate:
\fi
\thmtwosymbols*
\begin{proof}
  By Lemma~\ref{lem:sim},  \Mt simulates \Mf which in turns means that \Mt halts if and only if \Mf halts.
  By Theorem~\ref{thm:four} this means that  \Mt halts if and only if Erd\H{o}s' conjecture is false.
\end{proof}

\ErdosIsFinite*
\begin{proof}
  From the proof of Theorem~\ref{thm:four},
  If we run \Mf
  then, at step $\BB(5,4)+1 \in \mathbb{N}$ we know if Erd\H{o}s' conjecture is true or not:
  it is true if and only if \Mf is still running.
  If \Mf is still running, and because the machine outputs fewer than one power of $2$ per step, at step $\BB(5,4)+1$ the power of $2$ written on the tape is at most $2^{\BB(5,4)}$.
  Analogously, \Mt writes fewer than one  power of $2$ per step, hence, at step $\BB(\beatit)+1\in \mathbb{N}$
  the power of $2$ written on the tape is at most $2^{\BB(\beatit)}$.
  In either case, by then, we know whether the conjecture is true or not.
  Hence, it is enough to check Erd\H{o}s' conjecture for all $n \leq \min( \BB(15), \BB(5,4))$ and we get the result.
\end{proof}

\BBlowerBound*
\begin{proof}
  By Theorems~\ref{thm:two} and~\ref{thm:four}, if \Eds' conjecture has a counterexample $x$, then we have explicit busy beaver contenders for $\BB(\beatit)$ and $\BB(5,4)$, respectively, i.e.~machines $\Mt$ and $\Mf$. These
  machines both halt with $x = 2^{n_0}$,   $n_0\in\mathbb{N}$, written on their tape in base 3 (by the proofs of Theorems~\ref{thm:two} and~\ref{thm:four}).
  $\Mt$ and $\Mf$ enumerate less than one power of 2 per time step, hence their running time is $\geq n_0 = \log_2 x$, giving the stated lowerbound on  $\BB(\beatit)$ and $\BB(5,4)$.
\end{proof}

\vspace{-4ex}
\ifllncs
  \bibliographystyle{splncs04}
\else
  \bibliographystyle{abbrv}
\fi
\bibliography{main}

\ifllncs
  \appendix
  \section{Proof of Theorem 2}\label{app:proofThm2}
  \thmfoursymbols*

  \begin{proof}
    
  \end{proof}

  \section{Lemma 10: \Mt simulates \Mf}\label{app:lemmaSimulation}
  
\fi

\end{document}